\newcommand{\cut}{\cap}
\newcommand{\uni}{\cup}
\newcommand{\Buni}{\bigcup}
\newcommand{\duni}{\dot{\cup}}
\newcommand{\EG}[3]{{#1}[{#2},{#3}]}  
\newcommand{\eG}[2]{\EG{G}{{#1}}{{#2}}}  
\newcommand{\SUU}[2]{{#1}^{{#2}\times{#2}}}   
\newcommand{\sUU}[1]{\SUU{s}{#1}}
\newcommand{\SUV}[3]{{#1}^{{#2}\times{#3}}}    
\newcommand{\sUV}[2]{\SUV{s}{#1}{#2}}
\newcommand{\Sc}{\operatorname{scen}}   
\newcommand{\se}{\rtimes}  
\newcommand{\prc}[1]{\ensuremath{\mathsf{#1}}}
\newcommand{\NP}{\prc{NP}}
\newcommand{\SP}{\prc{\#P}}
\newcommand{\defexpr}[1]{{\em{#1}}}
\newcommand{\poly}{\mathsf{poly}}
\newcommand{\rk}{\operatorname{rk}}
\newcommand{\n}{\operatorname{n}}
\newcommand{\VV}{\widetilde{V}}
\newcommand{\UU}{\widetilde{U}}
\newcommand{\subs}{\subseteq}
\newcommand{\lc}[2]{{#1}^{{#2}}}   
\newcommand{\ep}[2]{{#1}^{{#2}}}   
\newcommand{\slt}{\nabla} 
\newcommand{\st}[3]{s_{\operatorname{join}}({#1}, {#2},{#3})} 
\newcommand{\sm}[3]{s_{\operatorname{forget}}({#1}, {#2}, {#3})}  
\renewcommand{\sp}[3]{s_{\operatorname{introduce}}({#1}, {#2}, {#3})}  
\newcommand{\si}[2]{s_{\operatorname{ignore}}({#1}, {#2})}  
\newcommand{\s}{\tilde{s}}
\renewcommand{\S}{\mathcal{S}}
\newcommand{\rmi}{\mathrm{i}}
\newcommand{\rmf}{\mathrm{f}}
\newcommand{\dr}[3]{\Delta r_{\operatorname{forget}}({#1}, {#2},{#3})} 
\newcommand{\dn}[3]{\Delta n_{\operatorname{forget}}({#1}, {#2},{#3})}
\newcommand{\zo}{\{0,1\}}  
\newcommand{\zot}{\{0,1,2\}}  
\newcommand{\tw}{\operatorname{tw}}
\renewcommand{\AA}{\widetilde{A}}
\newcommand{\CC}{\mathcal{C}}
\newcommand{\M}{\mathcal{M}}
\newcommand{\D}{\mathcal{D}}
\newcommand{\trunc}[2]{{#1} | {#2}}
\newtheorem{defi}{Definition}[section]
\newtheorem{cor}[defi]{Corollary}
\newtheorem{lem}[defi]{Lemma}
\newtheorem{thm}[defi]{Theorem}
\algnewcommand\Input{\item[\algorithmicinput]}%
\algnewcommand\algorithmicinput{\textbf{Input:}}
\begin{document}

\title{Fast Evaluation of Interlace Polynomials on Graphs of Bounded Treewidth}

\author{Markus Bl\"aser \and Christian Hoffmann}

\maketitle

\begin{abstract}
  We consider the multivariate interlace polynomial introduced by
  Courcelle (2008), which generalizes several interlace polynomials
  defined by Arratia, Bollobás, and Sorkin (2004) and by Aigner and
  van der Holst (2004). We present an algorithm to evaluate the
  multivariate interlace polynomial of a graph with $n$ vertices given
  a tree decomposition of the graph of width $k$. The best previously
  known result (Courcelle 2008) employs a general logical framework
  and leads to an algorithm with running time $f(k)\cdot n$, where
  $f(k)$ is doubly exponential in~$k$. Analyzing the $GF(2)$-rank of
  adjacency matrices in the context of tree decompositions, we give a
  faster and more direct algorithm. Our algorithm uses
  $2^{3k^2+O(k)}\cdot n$ arithmetic operations and can be efficiently
  implemented in parallel.
\end{abstract}

\section{Introduction}

Inspired by some counting problem arising from DNA sequencing
\cite{interlace_dna}, Arratia, Bollobás, and Sorkin defined a graph
polynomial which they called interlace polynomial
\cite{interlace_polynomial}.  It turned out that the interlace
polynomial is related \cite[Theorem 24]{interlace_polynomial} to the
Martin polynomial, which counts the number of edge partitions of a
graph into circuits.  This polynomial has been defined in Martin's
thesis from 1977 \cite{martin_thesis} and generalized by Las Vergnas
\cite{Las_Vergnas_Martin_Polynomial}.  Further work on the Martin
polynomial has been pursued \cite{Las_Vergnas_Eul,
  Las_Vergnas_Tutte33, Jaeger_Tutte_cycles_plane, emg_martin_new,
  emg_martin_misc, bb_eval_circuit_partition}, including a
generalization to isotropic systems \cite{bouchet_isotropic,
  bouchet_graphic_repr, bouchet_tutte_martin, bouchet_et_al}. In
particular, the Tutte polynomial of a planar graph and the Martin
polynomial of its medial graph are related. This implies a connection
between the Tutte polynomial and the interlace polynomial (see
Ellis-Monaghan and Sarmiento \cite{ems_distance_hereditary} for an
explanation).

One way to define the interlace polynomial is by a recursion that uses
a graph operation. Arratia et al.\ used a pivot operation for edges
\cite{interlace_polynomial}. This operation is a composition of local
complementations to neighbor vertices (see Aigner and van der Holst
\cite{aigner_holst}, where the operations are called switch
operations). The orbits of graphs under local complementation are
related to error-correcting codes and quantum states, and so is the
interlace polynomial as well \cite{danielsen_parker}.

The interlace polynomial can also be defined by a closed expression
using the $GF(2)$-rank of adjacency matrices \cite{aigner_holst,
  Bouchet05, ems_interlace_isotropic}. This linear algebra approach
has been used in several generalizations of the interlace polynomial.
In this paper, we consider the multivariate interlace polynomial
$C(G)$ defined by Courcelle \cite{courcelle_interlace_final} (see
Definition~\ref{def:interlace_polynom} below) as it subsumes the
two-variable interlace polynomial of Arratia, Bollobás, and Sorkin
\cite{arratia_two_var_interl} and the weighted versions of Traldi
\cite{traldi_weighted_interlace}, as well as the interlace polynomials
defined by Aigner and van der Holst \cite{aigner_holst}.  Furthermore,
the interlace polynomials $Q(x,y)$ and $Q_n^{HN}$, which have emerged
from a spectral view on the interlace polynomials \cite{riera_parker},
are also special cases of Courcelle's multivariate interlace
polynomial.

\subsection{Results and Related Work}

Our aim is to present an algorithm that, given a graph $G=(V,E)$ and
an evaluation point, i.e.\ a tuple of numbers $((x_a)_{a\in V},
(y_a)_{a\in V}, u, v)$, evaluates the multivariate interlace
polynomial $C(G)$ at $((x_a)_{a\in V}, (y_a)_{a\in V}, u, v)$.
Whereas this is a $\SP$-hard problem in general \cite{interlace_hard},
it is fixed parameter tractable with cliquewidth as parameter
\cite[Theorem 23, Corollary 33]{courcelle_interlace_final}. This is a
consequence of the fact that the interlace polynomial is monadic
second order logic definable ($MS_1$ definable as defined by
Courcelle, Makowsky, and Rotics
\cite{DBLP:journals/dam/CourcelleMR01}; see also Courcelle
\cite[Section 5]{courcelle_interlace_final}).\footnote{Note the
  following crucial difference with respect to monadic second order
  logic definability: $MS_1$ definable evaluation problems are fixed
  parameter tractable with \emph{cliquewidth} as parameter
  \cite[Theorem 31]{DBLP:journals/dam/CourcelleMR01}. $MS_1$ is a
  logic that allows one-sorted structures, the universe of which
  consists of the vertices of the graph. Set variables range over
  vertex subsets. On the contrary, $MS_2$ is a logic that allows
  two-sorted structures, the universe of which consists of the
  vertices \emph{and edges} of the graph. Set variables range over
  vertex subsets or edge subsets, which, for instance, enables the
  definition of the Tutte polynomial in $MS_2$. $MS_2$ definable
  evaluation problems are known to be fixed parameter tractable with
  \emph{treewidth} as parameter \cite[Theorem
    32]{DBLP:journals/dam/CourcelleMR01}. We can not expect that this
  generalizes to cliquewidth, see Fomin, Golovach, Lokshtanov, and
  Saurabh \cite{fomin_cliquewidth}.} Such graph polynomials can be
evaluated in time $f(k)\cdot n$, where $n$ is the number of vertices
of the graph and $k$ is the cliquewidth. The function $f(k)$ can be
very large and is not explicitly stated in most cases.  In general, it
grows as fast as a tower of exponentials the height of which is
proportional to the number of quantifier alternations in the formula
describing the graph polynomial \cite[Page
  34]{courcelle_interlace_final}. In the case of the interlace
polynomial, this formula involves two quantifier alternations
\cite[Lemma 24]{courcelle_interlace_final},
\cite{DBLP:journals/jct/CourcelleO07}. If a graph has tree width $k$,
its cliquewidth is bounded by $2^{k+1}$
\cite{DBLP:journals/dam/CourcelleO00}. Thus, the machinery of monadic
second order logic implies the existence of an algorithm that
evaluates the interlace polynomial of an $n$-vertex graph in time
$f(k)\cdot n$, where $k$ is the tree width of the graph and $f(k)$ is
at least doubly exponential in $k$. (In particular, the interlace
polynomial of graphs of treewidth~1, that is, of trees, can be
evaluated in polynomial time, which also has been observed by Traldi
\cite{traldi_weighted_interlace}.)

The monadic second order logic approach is very general and can be
applied not only to the interlace polynomial but to a much wider class
of graph polynomials \cite{DBLP:journals/dam/CourcelleMR01}. However,
it does not consider characteristic properties of the actual graph
polynomial. In this paper, we restrict ourselves to the interlace
polynomial so as to exploit its specific properties and to gain a more
efficient algorithm (Algorithm~\ref{alg:main}).  Our algorithm
performs $2^{3k^2+O(k)}n$ arithmetic operations to evaluate
Courcelle's multivariate interlace polynomial (and thus any other
version of the interlace polynomial mentioned above) on an $n$-vertex
graph given a tree decomposition of width $k$
(Theorem~\ref{thm:algo}). The algorithm can be implemented in parallel
using depth polylogarithmic in $n$ (Section~\ref{ssec:parallel}).
Apart from evaluating the interlace polynomial, our approach can also
be used to compute coefficients of the interlace polynomial, for
example so called $d$-truncations \cite[Section
  5]{courcelle_interlace_final} (Section~\ref{ssec:coefficients}). Our
approach is not via logic but via the $GF(2)$-rank of adjacency
matrices, which is specific to the interlace polynomial.

\subsection{Obstacles}

It has been noticed that the Tutte polynomial and the interlace
polynomial are similar in some respect \cite{arratia_two_var_interl}:
Both can be defined by a recursion using a graph operation, both can
be defined as closed sums over edge/vertex subsets involving some kind
of rank. These similarities suggest that evaluating the interlace
polynomial using tree decompositions might work completely analogously
to the respective approaches for the Tutte polynomial \cite{Andrzejak,
  Noble}. This is not the case because of the following problems.

Andrzejak's algorithm \cite{Andrzejak} to evaluate the Tutte
polynomial uses the deletion-contraction recursion for the Tutte
polynomial (via Negami's splitting formula \cite{Negami}). Deletion
and contraction of an edge has the nice property that it is compliant
with tree decompositions: If we are given the tree decomposition of a
graph and we delete (or contract) an edge, the original tree
decomposition (or, in the case of edge contraction, a simple
modification of it) is a tree decomposition of the modified graph. For
the interlace polynomial, on the other hand, the respective graph
operation is not compliant with tree decompositions: If we perform the
pivot operation from Arratia et al.\ \cite{interlace_polynomial} on a
graph, it is not clear how to obtain a tree decomposition of the
modified graph.  In particular, a single pivot operation can turn a
tree (treewidth 1) into a cycle (treewidth 2), see
Fig.~\ref{fig:pivot_increases_treewidth}.

\begin{figure}
  \centering
  \includegraphics{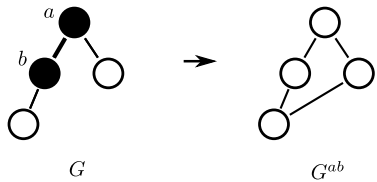}  
  \caption{Edge pivoting, a central graph operation for the interlace
    polynomial, increases treewidth.}
  \label{fig:pivot_increases_treewidth}
\end{figure}

Another problem is that in the Tutte case the recursion formula
naturally generalizes from the simplest versions (chromatic
polynomial) to the most general ones (it is the defining recursion of
the Bollobás-Riordan graph invariant \cite{bollobas_riordan};
cf.\ also the recurrence relation of the polynomial of Averbouch,
Godlin and Makowsky, which generalizes the Tutte polynomial and the
matching polynomial \cite{DBLP:conf/wg/AverbouchGM08}). The interlace
polynomial, in contrast, needs more and more complicated recursions
when generalizing the vertex-nullity interlace polynomial to the
multivariate interlace polynomial\footnote{But note that Traldi
  reduced a three-term recursion to a two-term recursion
  \cite[Corollary~2.4]{traldi_weighted_interlace}.} (see Courcelle
\cite[Proposition 12]{courcelle_interlace_final}).

When we consider Noble's algorithm \cite{Noble} and concentrate on the
definition of the Tutte / interlace polynomial by sums involving
ranks, another problem emerges. In the Tutte case, the rank is an easy
to understand graph theoretic value, namely the number of vertices
minus the number of connected components. Noble observes that, if a
graph is extended by a set of vertices and some edges between the old
and the new vertices, the set of all partitions of the new vertices
captures all possible types of ``behavior'' of the rank (i.e.\ number
of connected components) when the new vertices and some or all of the
new edges are added. -- For the interlace polynomial on the other
hand, the rank used in the definition is the rank over $GF(2)$ of the
adjacency matrix. Even though there exists a graph theoretic
interpretation of this rank \cite{traldi_binary_nullity}, it is
substantially more involved. Furthermore, an appropriate tool to
capture the ``rank behavior'' when extending a graph (such as vertex
partitions in the case of the Tutte polynomial) seems to be
missing. The main contribution of this work is to devise such a tool
and to prove that it works well with tree decompositions.

\subsection{Outline}

We compute the interlace polynomial by dynamic programming on the tree
decomposition of a graph. To this end, we analyze the behavior of the
$GF(2)$-rank of the adjacency matrix of a graph when the graph is
extended by a fixed number of vertices and edges between these new
vertices and the existing ones.

Section~\ref{sec:prelim} contains the definition of Courcelle's
multivariate interlace polynomial, which we will consider in this
work. We will also fix our notation for tree decompositions there. In
Section~\ref{sec:ideas} we present our approach in detail. This
includes the motivation and definition of two central terms: extended
graphs and scenarios.  A scenario captures the behavior of the rank of
an adjacency matrix when adding vertices. To define this precisely, we
introduce symmetric Gaussian elimination in
Section~\ref{sec:symgauss}. In Section~\ref{sec:scenarios}, we collect
properties of scenarios which enable us to use scenarios with tree
decompositions. In Section~\ref{sec:algo}, we describe and analyze our
algorithm, which evaluates the interlace polynomial by splitting it
into parts according to scenarios. In Section~\ref{sec:variants} we
discuss how our algorithm can be parallelized and used to compute
(some of the) coefficients of the interlace polynomial. Finally, in
Section~\ref{sec:open}, we mention directions for further research.

\section{Preliminaries}
\label{sec:prelim}

We consider undirected graphs without multiple edges but with self
loops allowed. Let $G=(V,E)$ be such a graph and $A\subs V$. By $G[A]$
we denote the subgraph of $G$ induced by $A$, i.e.\ $(A, \{e\ |\ e\in
E, e \subs A \})$. $G\slt A$ denotes the graph $G$ with self loops in
$A$ toggled, i.e.\ the graph obtained from $G$ by performing the
following operation for each vertex $a\in A$: if $a$ has a self loop,
remove it; if $a$ does not have a self loop, add one.

The adjacency matrix of $G$ is a symmetric square matrix with entries
from $\zo$. As the matrices that we will consider are adjacency
matrices of graphs, we use vertices as column/row indices.  Thus, the
adjacency matrix of $G$ is a $V\times V$ matrix $M=(m_{uv})$ over
$\zo$ with $m_{uv}=1$ iff $uv\in E$. Furthermore, we will refer to
entries and submatrices by specifying first the rows and then the
columns: the $(u,v)$-entry of $M=(m_{uv})$ is $m_{uv}$, the $A\times
B$ submatrix of $M$ is the submatrix of the entries of $M$ with row
index in $A$ and column index in $B$. All matrix ranks will be ranks
over the field with two elements, $\zo=GF(2)$, i.e.\ $+$ is XOR and
$\cdot$ is AND. Slightly abusing notation we write $\rk(G)$ for the
rank of the adjacency matrix of the graph $G$. The nullity (or
co-rank) of an $n\times n$ matrix $M$ is $\n(M)=n-\rk(M)$. If $G$ is a
graph, we write $\n(G)$ for the nullity of the adjacency matrix of
$G$.

\emph{Graph polynomials} are, from a formal perspective, mappings of
graphs to polynomials that respect graph isomorphism. We will consider
a \emph{multivariate} graph polynomial, the multivariate interlace
polynomial. To define such a polynomial, one has to distinguish
``ordinary'' indeterminates from \emph{$G$-indexed indeterminates}.
For instance, $x$ being a $G$-indexed indeterminate means that for
each vertex $a$ of $G$ there is a different indeterminate $x_a$. If
$A\subs V$, we write $x_A$ for $\prod_{a\in A} x_a$. 


\begin{defi}[Courcelle \cite{courcelle_interlace_final}]
\label{def:interlace_polynom}
  Let $G=(V,E)$ be an undirected graph. The multivariate interlace
  polynomial is defined as
  \begin{equation}
    \label{eq:interlace_polynom}
    C(G)=\sum_{\substack{A,B\subs V\\ A\cut B =\emptyset}} x_A y_B u^{\rk((G\slt B)[A\uni B])} v^{\n((G\slt
    B)[A\uni B])}, 
  \end{equation}
  where
  $u,v$ are called ordinary indeterminates and $x,y$ $G$-indexed
  indeterminates.
\end{defi}

\subsection{Tree Decompositions}
\label{ssec:treedecomp}

We borrow most of our notation from Bodlaender and Koster
\cite{DBLP:journals/cj/BodlaenderK08}. A \defexpr{tree decomposition}
of a graph $G=(V,E)$ is a pair $(\{X_i\ |\ i\in I\}, T=(I,F))$ where
$T$ is a tree and each node $i\in I$ has a subset of vertices
$X_i\subs V$ associated to it, called the bag of $i$, such that the
following holds:
\begin{enumerate}
\item Each vertex belongs to at least one bag, that is $\Buni_{i\in I}
  X_i = V$.
\item Each edge is represented by at least one bag, i.e.\ for all
  $e=vw\in E$ there is an $i\in I$ with $v,w\in X_i$.
\item For all vertices $v\in V$, the set of nodes $\{i \in I\ |\ v\in
  X_i\}$ induces a connected subgraph of $T$.
\end{enumerate}
The width of a tree decomposition $(\{X_i\},T)$ is $\max\{|X_i|\ |\
i\in I\}-1$. The treewidth of a graph $G$, $\tw(G)$, is the minimum
width over all tree decompositions of $G$.

Computing the treewidth of a graph is $\NP$-complete. But given a
graph with $n$ vertices, we can compute a tree decomposition of width
$k$ (or detect that none exists) using Bodlaender's algorithm in time
$2^{O(k^3)}n$ \cite{bodlaender} (cf.\ also Downey and Fellows
\cite[Section 6.3]{downey_fellows}).

To evaluate the interlace polynomial we will use \emph{nice} tree
decompositions. Note that our definition slightly deviates from the
usual one\footnote{Usually, there is no special restriction on the bag
  size of the root node, and the leaf nodes contain exactly \emph{one}
  vertex.}.  This has no substantial influence on the running time of
the algorithms discussed in this work, but it simplifies the
presentation. In a nice tree decomposition $(\{X_i\}, T)$, $T$ is a
rooted tree with $|X_r|=0$ for the root $r$ of $T$, and each node $i$
of $T$ is of one of the following types:
\begin{itemize}
\item Leaf: node $i$ is a leaf of $T$ and $|X_i|=0$.
\item Join: node $i$ has exactly two children $j_1$ and $j_2$, and
  $X_i=X_{j_1}=X_{j_2}$.
\item Introduce: node $i$ has exactly one child $j$, and there is a
  vertex $a\in V\setminus X_j$ with $X_i = X_j\uni \{a\}$.
\item Forget: node $i$ has exactly one child $j$, and there is a
  vertex $a\in V\setminus X_i$ with $X_j = X_i\uni \{a\}$.
\end{itemize}

A tree decomposition of width $k$ with $n$ nodes can be converted into
a nice tree decomposition of width $k$ with $O(n)$ nodes in time
$O(n)\cdot \poly(k)$ \cite[Lemma 13.1.2, 13.1.3]{Kloks}.

For a graph $G$ with a nice tree decomposition $(\{X_i\}, T)$, we
define
\[V_i=\Bigl( \Buni \{X_j\ |\ \text{$j$ is in the subtree of $T$ with
  root $i$} \} \Bigr) \setminus X_i\quad \text{and}\quad G_i =
G[V_i].\] We can think of $G_i$ as the subgraph of $G$ induced by all
vertices that have already been forgotten below node $i$.

\section{Idea}
\label{sec:ideas}

We will now sketch our idea how to evaluate the interlace polynomial.
Our approach is dynamic programming similar to the work of Noble
\cite{Noble}.  Let $G$ be a graph for which we want to evaluate the
interlace polynomial and $(\{X_i\}, T)$ a nice tree decomposition of
$G$.  For each node $i$ of the tree decomposition, we have defined the
graph $G_i$ that consists of all vertices in the bags below $i$ that
are not in $X_i$.  We will compute ``parts'' of the interlace
polynomial of $G_i$. These parts are essentially defined by the answer
to the following question: How does the rank of the adjacency matrix
of some subgraph of $G_i$ increase when we add (some or all) vertices
of $X_i$? For the leaves these parts are trivial.  Our algorithm
traverses the tree decomposition bottom-up.  We will show how to
compute the parts of an introduce, forget, or join node from the parts
of its child node (children nodes, resp.).  At the root node, there is
only one part left. This part is the interlace polynomial of $G$.

\label{rank_behavior_discussion}
Before we go into details, let us remark that the answer to the above
question (``How does the rank of the adjacency matrix increase when
adding some vertices?'')  depends on the internal structure of the
graph being extended. Consider the graph on the left hand side in
Figure~\ref{fig:rankincr2}. If we extend it by the black vertices, the
rank increases by $2$. But if we use the graph on the left hand side
in Figure~\ref{fig:rankincr4}, the \emph{same extension} causes a rank
increase by $4$.

\begin{figure}
  \centering
  \includegraphics{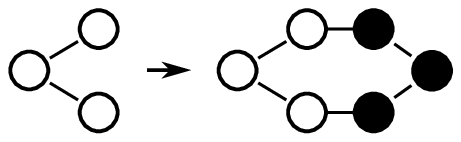}  
  \caption{Interlace polynomial and rank behavior: Rank over $GF(2)$
    of the adjacency matrix increases by 2 (from 2 to 4).}
  \label{fig:rankincr2}
\end{figure}
\begin{figure}
  \centering
  \includegraphics{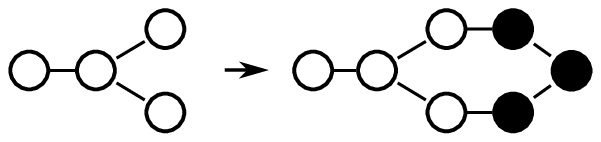}  
  \caption{Interlace polynomial and rank behavior: Rank over $GF(2)$
    of the adjacency matrix increases by 4 (from 2 to 6).}
  \label{fig:rankincr4}
\end{figure}

Let us see how we handle this issue. We start with the following
definition.

\begin{defi}[Extended graph]
  Let $G=(V,E)$ be some graph, $V', U\subs V$, $V'\cut U = \emptyset$.
  Then we define $\eG{V'}{U}$ to denote $G[V'\uni U]$ and call
  $\eG{V'}{U}$ an \defexpr{extended graph}, the graph obtained by
  \defexpr{extending $G[V']$ by $U$ according to $G$}. We call $U$ the
  \defexpr{extension of $\eG{V'}{U}$}.
\end{defi}

Let us fix an extension $U$. We consider all $V'\subs V(G)$ such that
$G[V']$ may be extended by $U$ according to the input graph $G$. For
any such extended graph we ask: ``How does the rank of $G[V']$
increase when adding some vertices of $U$?''. Our key observation is
that the answer to this question can be given without inspecting the
actual $G$ if we are provided with a compact description (of size
independent of $n=|V(G)|$), which we call the scenario of $\eG{V'}{U}$.

\label{lab:scenario_motivation}
The scenario of $\eG{V'}{U}$ (Definition~\ref{def:extension_scenario})
will be constructed in the following way.  Consider $M$, the adjacency
matrix of $G[V'\uni U]$. Perform symmetric Gaussian elimination on $M$
\emph{using only the vertices in $V'$} (for the details see
Section~\ref{sec:symgauss}).  The resulting matrix $M'$ is symmetric
again and has the same rank as $M$. Furthermore, $M'$ is of a form as
in Figure~\ref{fig:scenario_intuition01}: The $V'\times V'$ submatrix
is a symmetric permutation matrix with some additional zero
columns/rows.  The nonzero entries correspond to edges or self loops
(not of the original graph $G$ but of some modified graph that is
obtained from $G$ in a well-defined way) ``ruling'' over their
respective columns/rows: The edge between $v_1$ and $v_8$ rules over
columns and rows $v_1$ and $v_8$. Here, ``to rule'' means that the
only $1$s in these columns and rows are the $1$s at $(v_1,v_8)$ and
$(v_8, v_1)$.  Similarly, the self loop at vertex $v_5$ rules over
column and row $v_5$. The columns and rows that are ruled by some edge
or self loop in $V'$ are also empty (i.e.\ entirely zero) in the
$U\times V'$ submatrix of $M'$. Some columns/rows are not ruled by any
edge or self loop in $V'$, for instance column/row $v_4$.  This is
because there is neither a self loop at vertex $v_4$ nor does it have
a neighbor in $V'$.  However, $v_4$ may have neighbors in $U$.  Thus,
column $v_4$ of the $U\times V'$ submatrix may be any value from
$\zo^U$, which is indicated by the question marks. Also, the contents
of the $U\times U$ submatrix is not known to us.

\begin{figure}
  \centering
  \includegraphics{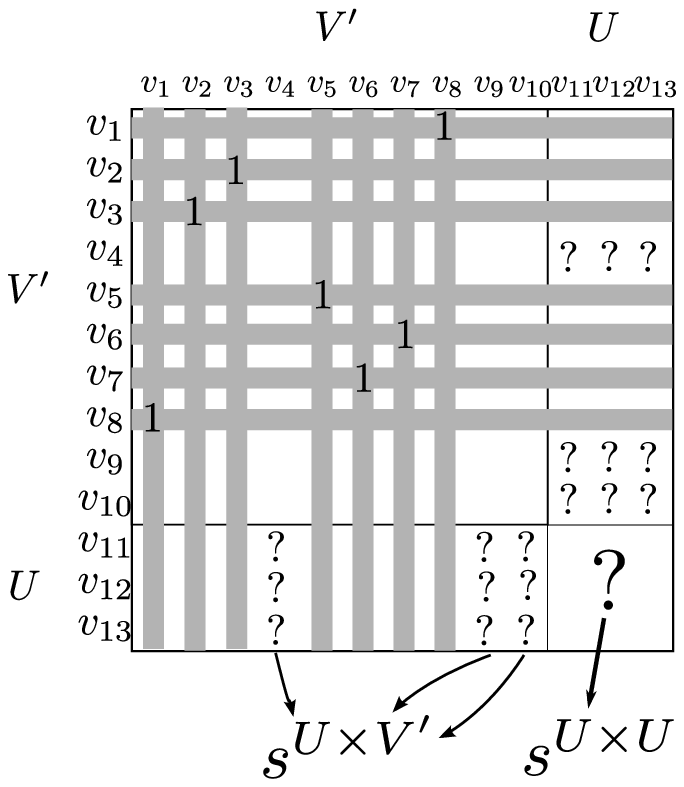}
  \caption{Adjacency matrix of $G[V'\uni U]$ after symmetric Gaussian
    elimination using $V'$. Empty entries are zero.}
  \label{fig:scenario_intuition01}
\end{figure}

Let us choose a basis of the subspace spanned by the nonzero columns
of the $U\times V'$ submatrix and call it $\sUV{U}{V'}$.  Let
$\sUU{U}$ be contents of the $U\times U$ submatrix.  By this
construction, we are able to describe the rank of $M'$ as the rank of
its $V'\times V'$ submatrix plus a value that can be computed solely
from $\sUV{U}{V'}$ and $\sUU{U}$.

This will solve our problem that the rank increase depends on the
internal structure of the graph $G[V']$ being extended: all we need to
know is the scenario $s=(\sUV{U}{V'}, \sUU{U})$ of $\eG{V'}{U}$.  From $s$,
without considering $G[V']$, we can compute in time $\poly(|U|)$ how
the rank of the adjacency matrix of $G[V']$ increases when we add some
vertices from $U$. This motivates the following definition.

\begin{defi}[Scenario]
\label{def:scenario}
Let $U$ be an extension, i.e.\ a finite set of vertices. A
\defexpr{scenario of $U$} is a tuple $s=(\sUV{U}{V'},\sUU{U})$ where
$\sUV{U}{V'}$ is an ordered set of linearly independent vectors spanning a
subspace of $\zo^U$ and $\sUU{U}$ is a symmetric $(U \times U)$-matrix
with entries from $\zo$.  A \defexpr{scenario for $k$ vertices} is a
scenario of some vertex set $U$ with $|U|=k$.
\end{defi}

Let us come back to the evaluation of the interlace polynomial of $G$
using a tree decomposition. Recall that at a node $i$ of the tree
decomposition we want to compute ``parts'' of the interlace polynomial
of $G[V_i]$. Essentially every scenario $s$ of $X_i$ will define such
a part: The interlace polynomial itself is a sum over \emph{all}
induced subgraphs with self loops toggled for some vertices. The part
of the interlace polynomial corresponding to scenario $s$ will be the
respective sum not over all these graphs but only over the ones such
that $s$ is the scenario of $\eG{V_i}{X_i}$. This will lead us to
\eqref{eq:bubble_term} in Section~\ref{sec:blubble_term}. To compute
the parts of a join, forget and introduce node from the parts of its
children nodes (child node, resp.), we will employ
Lemma~\ref{lem:sum_join}, \ref{lem:sum_introduce} and
\ref{lem:sum_forget}. These are based on the fact that scenarios are
compliant with tree decompositions, which we will prove in
Section~\ref{sec:scenarios} (Lemma~\ref{lem:rank_join},
Lemma~\ref{lem:rank_introduce} and Lemma~\ref{lem:rank_forget}). An
example for the overall procedure of the algorithm is depicted in
Figure~\ref{fig:algo_example}.

\begin{figure}[!h]
  \centering
  \includegraphics{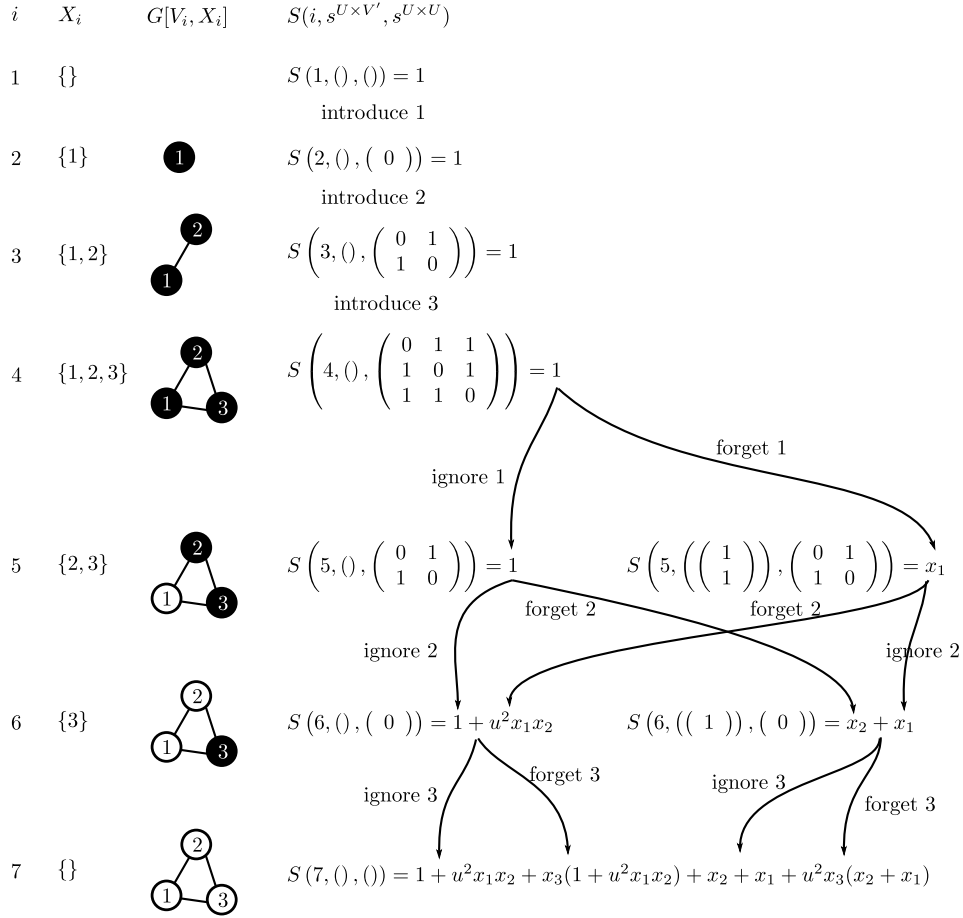}
  \caption{Computation of the interlace polynomial $C(G;y=0,v=1)$ of a
    triangle. In order to simplify the illustration, we ignore
    parameter $D$ in \eqref{eq:bubble_term}, which handles the ``self
    loop toggling feature'' of the interlace polynomial.}
  \label{fig:algo_example}
\end{figure}

The time bound of our algorithm stems from the following observation:
The number of parts managed at a node $i$ of the tree decomposition is
essentially bounded by the number of scenarios of its bag $X_i$. This
number is independent of the size of $G$ and single exponential in the
bag size (and thus single exponential in the treewidth of~$G$):

\begin{lem}
\label{lem:scenario_bound}
Let $U$ be an extension, i.e.\ a finite set of vertices, $|U|=k$.
There are less than $2^{(3k+1)k/2}$ scenarios of $U$.
\end{lem}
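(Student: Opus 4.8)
The plan is to count the two components of a scenario $s = (\sUV{U}{V'}, \sUU{U})$ of $U$ separately and multiply the bounds. Fix $|U| = k$. The second component $\sUU{U}$ is a symmetric $(U\times U)$-matrix over $\zo$, so it is determined by its entries on and above the diagonal: there are $\binom{k}{2} + k = k(k+1)/2$ such entries, each in $\zo$, giving exactly $2^{k(k+1)/2}$ choices. The first component $\sUV{U}{V'}$ is an \emph{ordered} set of linearly independent vectors in $\zo^U$; its cardinality $r$ can be anything from $0$ to $k$. So I would bound the number of such ordered independent tuples and then combine.

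For the first component, the number of ordered $r$-tuples of linearly independent vectors in $\zo^k$ is $\prod_{i=0}^{r-1}(2^k - 2^i)$, which is at most $2^{rk}$. Summing over $r$ from $0$ to $k$ gives at most $\sum_{r=0}^{k} 2^{rk} = (2^{k(k+1)} - 1)/(2^k - 1) < 2 \cdot 2^{k^2} $ for $k \ge 1$; in fact the dominant term is $2^{k^2}$, and the whole sum is comfortably below $2^{k^2 + 1}$. Multiplying the two bounds yields fewer than $2^{k^2 + 1} \cdot 2^{k(k+1)/2} = 2^{k^2 + k(k+1)/2 + 1} = 2^{(3k^2 + k)/2 + 1} = 2^{(3k^2 + k + 2)/2}$ scenarios. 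To match the stated bound $2^{(3k+1)k/2} = 2^{(3k^2+k)/2}$ exactly, I would tighten the first estimate: since $\prod_{i=0}^{r-1}(2^k - 2^i) \le 2^{rk} - 1 < 2^{rk}$ strictly whenever $r \ge 1$, and the $r=0$ term contributes $1$, the sum over all $r$ is at most $1 + \sum_{r=1}^{k}(2^{rk}-1) < \sum_{r=0}^{k} 2^{rk}$, and a slightly more careful accounting (or simply noting the product is at most $2^{(r-1)k}(2^k-1)$ for $r\ge 1$) brings the geometric sum strictly below $2^{k^2}$. Combined with the \emph{strict} inequality, this gives fewer than $2^{k^2} \cdot 2^{k(k+1)/2} = 2^{(3k+1)k/2}$ scenarios.

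The only mild subtlety — not really an obstacle — is making sure the arithmetic on exponents lands exactly on $(3k+1)k/2$ rather than that plus a small constant; this is just a matter of using the strict inequality $\prod_{i=0}^{r-1}(2^k-2^i) < 2^{rk}$ carefully and of checking the geometric series sums to strictly less than $2^{k^2}$. Everything else is a routine dimension count, and no structural insight about graphs, tree decompositions, or Gaussian elimination is needed here: the lemma is purely a statement about how many tuples of the prescribed linear-algebraic shape exist over $\zo$.
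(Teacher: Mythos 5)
Your proof takes essentially the same approach as the paper: factor the count of scenarios into (number of choices for $\sUU{U}$) $\times$ (number of choices for $\sUV{U}{V'}$), compute the first factor as $2^{k(k+1)/2}$ exactly, bound the second factor by $2^{k^2}$, and multiply. The only difference is in how the second factor is bounded. You count \emph{ordered} $r$-tuples of linearly independent vectors, arriving at $\sum_{r=0}^{k}\prod_{i=0}^{r-1}(2^k-2^i)$ and bounding the summand by $2^{(r-1)k}(2^k-1)$ for $r\ge 1$, which telescopes to give at most $2^{k^2}$ (strict for $k\ge 2$). The paper instead counts \emph{unordered} linearly independent subsets via $\sum_{d\le k}\binom{2^k-1}{d}\le (k+1)\binom{2^k-1}{k}<2^{k^2}$, which is justified because the basis in Definition~\ref{def:extension_scenario} is chosen as the minimal one according to the fixed vertex order, so the ordering carries no extra information. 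Your ordered count overcounts relative to the paper's intent but still lands under the same bound, so it is a valid proof; it also means you did not rely on the ``minimal basis'' convention, which makes your argument marginally more robust but slightly looser. One small observation that applies to both your proof and the paper's: the strict inequality fails at the degenerate case $k=1$ (both sides equal $4$), but this edge case is immaterial for the algorithmic application.
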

\begin{proof}
  The number of symmetric $\zo$-matrices of dimension $k\times k$ is
  $2^{(k+1)k/2}$, as a symmetric matrix is determined by its left
  lower half. Thus, there are $2^{(k+1)k/2}$ possibilities for
  $\sUU{U}$.

  For $\sUV{U}{V'}$, there less than $2^{k^2}$ possibilities: As there are
  $2^k-1$ non-zero elements of $\zo^k$, the number of linearly
  independent subsets of $\zo^U$ with $d$ elements is bounded by
  $\binom{2^k-1}{d}$. Thus, the number of \emph{all} linearly
  independent subsets of $\zo^U$ is at most
  \[ \sum_{0\leq d\leq k}\binom{2^k-1}{d} \leq (k+1)\binom{2^k-1}{k} <
  2^{k^2}. \quad\quad \quad \tag*{\qedhere}
\]
\end{proof}

\section{Symmetric Gaussian Elimination}
\label{sec:symgauss}

We want to convert adjacency matrices into matrices of a form as in
Figure~\ref{fig:scenario_intuition01} without touching the rank. In
order to achieve this, we introduce a special way of performing
Gaussian elimination that differs from standard Gaussian elimination
in the following way. First, it is symmetric, as in general every
column operation is followed by a corresponding row operation. In this
way, we maintain the correspondence between rows/columns of the matrix
we are manipulating and vertices of a graph. Second, we adhere to a
particular order when deciding which entry to use for the next pivot
operation. This order is (partially) fixed by the tree decomposition.
It is crucial for our proofs of the statements in
Sect.~\ref{sec:converting} that the elimination process proceeds
according to this order. Third, we perform symmetric Gaussian
elimination using only vertices in a \emph{subset} $V'$ of the
vertices: When seeking a pivot entry in a particular row/column, we do
not consider all entries of the row/column but only the ones that
correspond to edges between vertices in $V'$.

\subsection{Elimination with a Single Vertex}
\label{ssec:elimination_step}

Assume we are given a graph $G=(V,E)$, its adjacency matrix $M$ and a
vertex $v$. We would like to compute the rank of $M$ as the ``effect
of $v$ on the rank'' plus the rank of a submatrix in which we have
deleted $v$. This might not immediately be possible using $M$ itself,
but we can achieve it by an appropriate modification of $M$. Arratia
et al.\ observe that edge pivot and local complementation are such
appropriate modifications \cite[Lemma 2, Lemma
  5]{arratia_two_var_interl}. For our purposes, we want to control the
order of the operations on the adjacency matrix. Thus, we do not use
edge pivot and local complementation directly, but define a
\emph{symmetric Gaussian elimination step} on $M$ using $v$ in the
following way:
\begin{itemize}
\item If $v$ is an isolated vertex without a self loop, we have
  situation (1) of Figure~\ref{fig:steps}. Vertex $v$ has no influence
  on the rank of the adjacency matrix and we can delete the column and
  row corresponding to $v$ without changing the rank of the adjacency
  matrix. The result of the elimination step is just $M$.
\item If $v$ has a self loop, there is a $1$ in the $(v,v)$-entry of
  $M$. The elimination step consists of the following operations.
  Except for entry $(v,v)$, we remove all $1$s in the $v$-column and
  $v$-row using the following pair of operations for each neighbor $u$
  of $v$: First, add the $v$-column to the $u$-column.  Then, in the
  modified matrix, add the $v$-row to the $u$-row. We denote the
  result of the whole process by $M\se v$, which is depicted as (2) in
  Figure~\ref{fig:steps}.  Note that $M\se v$ is symmetric again.  The
  rank of $M$ equals $1$ plus the rank of $M\se v$ with $v$-column and
  $v$-row deleted.

  Note that -- up to order of the operations -- this is local
  complementation on $v$: Writing $\lc G v$ for the local complement
  of a graph $G$ on vertex $v$ \cite[Definition
    4]{arratia_two_var_interl}, the adjacency matrix of $\lc G v$ is
  $M \se v$ \cite[Proof of Lemma 5]{arratia_two_var_interl}.
\item If $v$ is neither isolated nor has a self loop, there is a
  neighbor $u$ of $v$. Assume that $u$ does not have a self loop. The
  $(u,v)$- and $(v,u)$-entries of $M$ equal $1$. The elimination step
  consists of the following operations. In the first stage, except for
  $(u,v)$ and $(v,u)$, we remove all $1$s in the $v$-column and
  $v$-row. This is accomplished by the following pair of operations
  for each neighbor $u'$ of $v$, $u'\neq u$: First, add the $u$-column
  to the $u'$-column.  Then, in the modified matrix, add the $u$-row
  to the $u'$-row.  Again, performing such a pair of column/row
  operations keeps a symmetric matrix symmetric.  At the end of the
  first stage the $v$-column and $v$-row consist entirely of $0$s,
  except for the entry at the $u$-position, which is~$1$. The second
  stage proceeds as follows: we add the $v$-column to every column
  which has a $1$ in the $u$-row, and we also add the $v$-row to every
  row which has a $1$ in the $u$-column. At the end of this stage also
  the $u$-column and $u$-row consist only of $0$s except at the
  $v$-position.  The result of the second stage is a symmetric matrix
  again, which we denote by $M\se vu$. It is depicted as (3) in
  Figure~\ref{fig:steps}. We do not swap columns/rows, as we must keep
  the vertices in a particular order, which is determined by the tree
  decomposition, cf.\ Section~\ref{ssec:order}. The rank of $M$ equals
  $2$ plus the rank of $M\se vu$ with $u$- and $v$-column and $u$- and
  $v$-row deleted. Note that this matrix is also the adjacency matrix
  of $\ep G {vu} [V\setminus \{v,u\}]$, where $\ep G {vu}$ denotes
  edge pivot of $G$ on vertices $v$ and $u$ \cite[Definition 1, Lemma
    2]{arratia_two_var_interl}.

  If $u$ has a self loop we proceed analogously to obtain a matrix
  with a structure as (4) in Figure~\ref{fig:steps}.  Then we can
  eliminate the self loop at $u$ by, say, adding column $v$ to column
  $u$. (As at this point column $v$ is zero everywhere except at $u$,
  only entry $(u,u)$ of the matrix is changed by this operation and
  the symmetry is not destroyed.) Thus, we obtain a matrix exactly as
  (3) in Fig.~\ref{fig:steps}.
\end{itemize}

\begin{figure}
  \centering
  \includegraphics{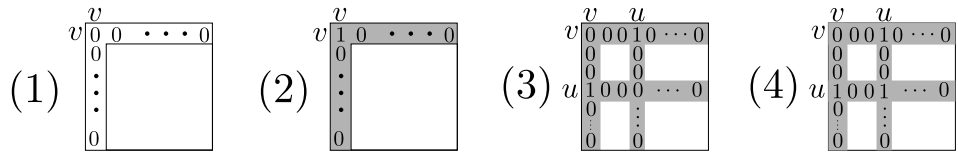}
  \caption{Effect of a symmetric Gaussian elimination step. Adjacency
    matrix with isolated unlooped vertex $v$ (1), adjacency matrix
    after eliminating with a self loop at $v$ (2), adjacency matrix
    after eliminating with edge $vu$ (3).}
  \label{fig:steps}
\end{figure}

We can describe the effect of a symmetric elimination step on the
entries of the matrix (aside from the entries being set to $0$) in the
following way.

\begin{lem}
  Let $M=(m_{ij})$ be an adjacency matrix, let $a$ be a vertex with a
  self loop, and $m_{yx}$ some entry of $M$ which is not in column or
  row $a$, i.e.\ $a\not \in \{x,y\}$.  Then, after symmetric Gaussian
  elimination using $a$, the $(y,x)$-entry of $M$ will be
  \[(M\se a)_{yx} = m_{yx} + m_{ax} m_{ya}. \]
\end{lem}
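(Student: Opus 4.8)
\emph{Proof plan.} The plan is to package the entire elimination step as a single two-sided (congruence) transformation $M\se a = P^T M P$ of the adjacency matrix and then simply read off the $(y,x)$-entry. Let $e_a,e_u\in\zo^V$ denote the standard basis vectors indexed by vertices. The operation ``add the $a$-column to the $u$-column'' is right multiplication by $I+e_ae_u^T$, and the operation ``add the $a$-row to the $u$-row'' is left multiplication by $(I+e_ae_u^T)^T=I+e_ue_a^T$. Hence processing one neighbour $u$ of $a$ (first the column operation, then the row operation, as prescribed by the elimination step) sends the current matrix $N$ to $(I+e_ae_u^T)^T N (I+e_ae_u^T)$. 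Composing these over all neighbours $u_1,\ldots,u_m$ of $a$ yields
\[
  M\se a = P^T M P,\qquad P=\prod_{t=1}^{m}\bigl(I+e_ae_{u_t}^T\bigr).
\]

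Next I would simplify $P$. Since each neighbour satisfies $u_t\neq a$, we have $e_{u_t}^Te_a=0$, so the product of any two factors $e_ae_{u_t}^T$ and $e_ae_{u_s}^T$ vanishes; therefore all higher-order terms in the expansion of $P$ drop out and
\[
  P = I + e_a w^T,\qquad w:=\sum_{t=1}^{m}e_{u_t}\in\zo^V,
\]
where $w$ is the characteristic vector of the neighbourhood of $a$; equivalently $w$ is the $a$-column $Me_a$ of $M$ with its self-loop entry cleared, i.e.\ $Me_a=w+e_a$ (here we use that $a$ has a self loop, so $m_{aa}=1$), and $e_a^TMe_a=m_{aa}=1$. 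Expanding the congruence, using symmetry of $M$ (so $e_a^TM=w^T+e_a^T$), and collecting terms over $\zo$ gives
\[
  M\se a = M + (Me_a)w^T + w(e_a^TM) + (e_a^TMe_a)\,ww^T = M + ww^T + e_aw^T + we_a^T .
\]
Now fix $x,y$ with $a\notin\{x,y\}$. The term $e_aw^T$ is supported on row $a$ and $we_a^T$ on column $a$, so both contribute $0$ to the $(y,x)$-entry, while $(ww^T)_{yx}=w_yw_x=m_{ay}m_{ax}=m_{ax}m_{ya}$. Hence $(M\se a)_{yx}=m_{yx}+m_{ax}m_{ya}$, as claimed.

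The one place that needs a little care is the reduction to a congruence in the first step: the elimination step as defined interleaves the operations neighbour by neighbour, so one has to check that each column-operation/row-operation pair really is a two-sided congruence by the \emph{same} elementary matrix $I+e_ae_{u_t}^T$, so that the pieces compose to $P^TMP$ rather than to something asymmetric; once this bookkeeping is done the remainder is a routine expansion. (As a by-product, the formula $P=I+e_aw^T$ shows that the elimination step does not depend on the order in which the neighbours of $a$ are processed.) An alternative, more elementary route would be a direct case analysis according to whether $x$ and $y$ are neighbours of $a$, tracking exactly which operations alter the $(y,x)$-entry and in which order; this works but is noticeably more tedious.
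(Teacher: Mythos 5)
Your proof is correct, and it takes a genuinely different route from the paper. The paper handles this lemma only implicitly: it proves the companion edge-elimination statement (Lemma~\ref{lem:edgeel_effect}) by a direct, figure-assisted case analysis -- tracking exactly which of the interleaved column additions and row additions touch the $(y,x)$-entry and in what order -- and then declares the self-loop case ``completely analogous.'' Your argument instead packages the whole elimination step into a single congruence $M\se a = P^T M P$, simplifies $P$ to $I + e_a w^T$ with $w$ the neighbourhood vector, and reads off the entry from the closed form $M + ww^T + e_a w^T + w e_a^T$ over $GF(2)$. I checked the details: each column-then-row pair is indeed conjugation by $I+e_ae_{u}^T$, the cross terms vanish because $e_{u_t}^Te_a=0$, and the final expansion uses $Me_a=w+e_a$ and $m_{aa}=1$ correctly, so $(ww^T)_{yx}=m_{ax}m_{ya}$ as required. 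The structural approach buys you a cleaner closed form, order-independence of the neighbour processing as a free corollary, and a template that would also streamline the edge case; the paper's elementary case analysis is less machinery but more bookkeeping. One small wrinkle worth being aware of (you flagged it yourself): the congruence $P^TMP$ is the right formalisation even though the operations interleave, because the elementary matrix $I+e_ae_{u_t}^T$ acts on the \emph{current} matrix and the modifications it makes to column/row $a$ (entry $(u_t,a)$ flips) are automatically carried through by matrix composition; one does not need column $a$ to stay constant for the product form to hold.
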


\begin{lem}
\label{lem:edgeel_effect}
Let $M=(m_{ij})$ be an adjacency matrix, let $a$ be a vertex without a
self loop, $ab$ an edge and $m_{yx}$ some entry of $M$ which is not in
column or row $a$ or $b$, i.e.\ $\{x,y\}\cut \{a,b\}=\emptyset$.
Then, after symmetric Gaussian elimination using $ab$, the
$(y,x)$-entry of $M$ will be
  \[(M\se ab)_{yx} = m_{yx} + m_{ax} m_{yb} + m_{ya} m_{bx} + m_{ax}
  m_{ya} m_{bb}. \]
\end{lem}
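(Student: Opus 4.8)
The plan is to track the entry $m_{yx}$ through the two-stage elimination procedure described in Section~\ref{ssec:elimination_step} for the case that $a$ has no self loop and $ab$ is an edge. I would organize the argument by mirroring the two stages exactly, and I would first treat the generic case $m_{bb}=0$ (so we land in situation (3) of Figure~\ref{fig:steps} directly), then handle $m_{bb}=1$ separately and observe that the extra clean-up step contributes the term $m_{ax}m_{ya}m_{bb}$.

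First I would set up the first stage. Here, for every neighbor $u'$ of $a$ with $u'\neq b$, we add column $b$ to column $u'$ and then row $b$ to row $u'$. Since the entry $(y,x)$ is not in row or column $a$ or $b$, the only way it changes in this stage is if its column index $x$ happens to be such a neighbor $u'$ of $a$ (i.e. $m_{ax}=1$, $x\neq a,b$) or its row index $y$ is (i.e. $m_{ya}=1$). Adding column $b$ to column $x$ when $m_{ax}=1$ adds $m_{yb}$ to the $(y,x)$-entry; subsequently adding row $b$ to row $y$ when $m_{ya}=1$ adds the current $(b,x)$-entry to it. I would compute carefully what that $(b,x)$-entry is at that moment: it started as $m_{bx}$, and it may have been modified during the earlier column operations of the first stage (when we add column $b$ to column $x$, the $(b,x)$-entry gains $m_{bb}$) — this is exactly where the $m_{bb}$-dependence creeps in, and I would track it honestly. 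Collecting: after stage one the $(y,x)$-entry is $m_{yx} + m_{ax}m_{yb} + m_{ya}(m_{bx} + m_{ax}m_{bb})$, and at this point column $a$ and row $a$ are zero except at position $b$.

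Next I would handle the second stage: we add column $a$ to every column having a $1$ in row $b$, and row $a$ to every row having a $1$ in column $b$. But after stage one, column $a$ is entirely zero except at the $(b,a)$-entry, and likewise row $a$; since $y\neq b$ and $x\neq b$, adding column $a$ to column $x$ (or row $a$ to row $y$) does not affect the $(y,x)$-entry at all. Hence the $(y,x)$-entry is unchanged by stage two, and in the case $m_{bb}=0$ we obtain $(M\se{ab})_{yx} = m_{yx} + m_{ax}m_{yb} + m_{ya}m_{bx}$, matching the claimed formula since the last term vanishes. For the case $m_{bb}=1$, the self-loop-removal step (adding column $a$ to column $b$, then symmetrically) again only touches entries in row/column $a$ or $b$ and so leaves $(y,x)$ fixed; combined with the stage-one computation above — which already carried the $m_{ya}m_{ax}m_{bb}$ contribution — this yields the full formula. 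I would also remark that everything is over $GF(2)$, so no signs appear and $m_{ax}^2=m_{ax}$ etc., which is used implicitly when simplifying.

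The main obstacle I anticipate is the bookkeeping in the first stage: the order of the column and row operations matters, because an operation on column $x$ can alter the $(b,x)$-entry before the later row operation on row $y$ reads it off. Getting the $m_{bb}$ term to come out with the right coefficient requires being precise about this interleaving of column-then-row operations (as the definition in Section~\ref{ssec:elimination_step} specifies), rather than pretending all column operations commute with all row operations. A clean way to sidestep most of the mess is to note that the composite stage-one-plus-stage-two operation equals edge pivot $\ep{G}{ab}$ up to the already-zeroed rows/columns (as stated in the excerpt, citing \cite[Definition 1, Lemma 2]{arratia_two_var_interl}), so one could alternatively derive the formula directly from the known effect of edge pivot on adjacency-matrix entries; but since the excerpt introduces these elimination steps from scratch, I would give the direct operation-by-operation verification and keep the pivot remark as a sanity check.
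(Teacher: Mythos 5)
Your proof is correct and takes essentially the same route as the paper: track the $(y,x)$-entry through the first-stage pair of operations (column $b$ to column $x$ contributing $m_{ax}m_{yb}$, then row $b$ to row $y$ contributing $m_{ya}$ times the \emph{current} $(b,x)$-entry $m_{bx}+m_{ax}m_{bb}$), and observe that the second stage and the self-loop clean-up cannot touch $(y,x)$ because by then column/row $a$ are zero outside position $b$. The paper fixes the interleaving with an explicit WLOG $x\leq y$ while you only flag the issue informally, and your preliminary split into $m_{bb}=0$ vs.\ $m_{bb}=1$ is unnecessary scaffolding, but the substance is identical.
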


We prove the statement about edge elimination, the case of self loop
elimination is completely analogously.
\begin{proof}[of Lemma~\ref{lem:edgeel_effect}]
  Let us assume that $x\leq y$ (the case $x>y$ is analogous). The
  situation is depicted in Figure~\ref{fig:edgeel_effect}.  Depending
  on the $(a,x)$-entry being $1$ or not, column $b$ is added to column
  $x$, which adds the $(y,b)$-entry to the $(y,x)$-entry. This gives
  the term $m_{ax} m_{yb}$. After that, depending on the
  $(y,a)$-entry, row $b$ is added to row $y$. This adds the actual
  value of the $(b,x)$-entry to the $(y,x)$-entry. By the previous
  column addition, the actual $(b,x)$-entry is $m_{bx} + m_{ax}
  m_{bb}$. Thus, the row addition contributes a term $m_{ya}(m_{bx} +
  m_{ax}m_{bb})$. The second stage has no effect on the $(y,x)$ entry:
  Column $a$ may be added to some other columns. But at this point of
  time, column $a$ is entirely zero, except at the $b$ entry. Thus,
  addition of the $a$ column has no effect on the $(y,x)$ entry. The
  same is true for addition of the $a$ row.
\end{proof}

\begin{figure}
  \centering
  \includegraphics{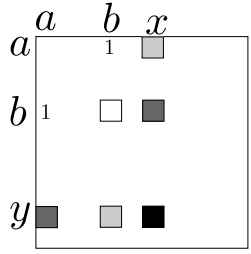}
  \caption{During a symmetric Gaussian elimination step using edge
    $ab$, entry $(y,x)$ is affected only by the entries at $(a,x)$,
    $(y,b)$, $(y,a)$, $(b,x)$ and $(b,b)$.}
  \label{fig:edgeel_effect}
\end{figure}

\subsection{Vertex Order, Elimination with Vertex Sets, and the
  Scenario of an Extended Graph}
\label{ssec:order}

We want to define symmetric Gaussian elimination using a whole set
$V'\subs V$ of vertices. This means that we perform elimination steps
using each vertex from $V'$. The result of this process depends on the
order in which we use the vertices for elimination steps. Therefore we
introduce an order on the vertices of the graph, which will be
computed before the computation of the interlace polynomial starts. We
will use this order throughout the rest of the paper.  Whenever there
could be any ambiguity, we proceed according to this order.

The vertex order we are using must be compliant with the tree
decomposition we are using: Whenever a vertex is forgotten, it must be
greater than all the vertices which have been forgotten before. Or,
equivalently, the vertices in the extension $X_i$ must be greater than
the vertices in $V_i$ for each node $i$ of the tree decomposition.
Such an order can be obtained by Algorithm~\ref{alg:order}.

\begin{algorithm}
\caption{Supplying a vertex order.}
\label{alg:order}
\begin{algorithmic}[1]
  \Procedure{SupplyVertexOrder}{}
  \State $c\gets 1$
  \ForAll {nodes $i$, in the order of bottom-up traversal, i.e.\ each father node is visited after all its children}
      \If {$i$ is a forget node}
        \State $a\gets$ vertex being forgotten at node $i$
        \State give vertex $a$ number $c$ in the vertex order
\label{alg:order:fix}
        \State $c\gets c+1$
      \EndIf
  \EndFor
  \EndProcedure
\end{algorithmic}
\end{algorithm}

Now we are ready to define elimination using a set of vertices.

\begin{defi}
\label{def:sge}
Let $V'\subs V$ be a set of vertices of a graph $G=(V,E)$ with
adjacency matrix $M$. \defexpr{Symmetric Gaussian elimination on $G$
  using $V'$} is defined as the following process: If $V'=\emptyset$,
we are done and $M$ is the output of the symmetric Gaussian
elimination process using $V'$. Otherwise, we let $v$ be the minimum
vertex in $V'$. If $v$ has a self loop we let $M'=M\se v$. Otherwise,
we check whether $v$ has a neighbor $u$ in $V'$.  If yes, we let
$M'=M\se vu$, where $u$ is the minimum neighbor of $v$.  If no, we let
$M'=M$.  This concludes the processing of $v$. To complete the
elimination using $V'$, we continue recursively with $V'\setminus
\{v\}$ in the role of $V'$ and $M'$ in the role of $M$.
\end{defi}

We also order vertex vectors (i.e.\ elements from $\zo^U$, $U$ some
vertex set) and sets of vertex vectors according to the vertex order
(lexicographically). This induced order is used for choosing a
``minimal'' basis in the following definition.

\begin{defi}[Scenario of an extended graph]
\label{def:extension_scenario}
Let $\eG{V'}{U}$ be an extended graph obtained by extending $G[V']$ by
$U$ according to graph $G=(V,E)$. Let the vertex order be such that
$v'<u$ for all $v'\in V'$ and $u\in U$. Then the scenario
$\Sc(\eG{V'}{U})$ of $\eG{V'}{U}$ is defined as follows: Let $M$ be
the adjacency matrix of $G[V'\uni U]$. Perform symmetric Gaussian
elimination on $M$ using $V'$ to obtain $M'$. Let $M'_{UV'}$ be the
$U\times V'$ submatrix of $M'$. Consider the column space $W$ of
$M'_{UV'}$. We can choose a basis of $W$ from the column vectors of
$M'_{UV'}$. Let $\sUV{U}{V'}$ be the minimal such basis.  Let
$\sUU{U}$ be the contents of the $U\times U$ submatrix of $M'$. We
define $\Sc(\eG{V'}{U})$ to be $(\sUV{U}{V'}, \sUU{U})$.
\end{defi}

The minimal basis $\sUV{U}{V'}$ in the preceding definition can by
obtained by the following steps: Start with an empty set of columns
and then as often as possible take the minimum column of $M'_{UV'}$
which is not in the span of the so far collected columns.

\section{Scenarios and Nice Tree Decompositions}
\label{sec:scenarios}
\label{sec:converting}

\begin{figure}
  \centering
  \includegraphics{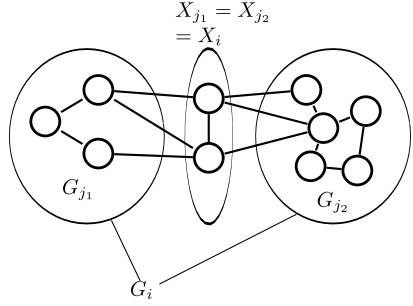}  
  \caption{Graphs corresponding to a join node $i$ and its child nodes
    $j_1$, $j_2$.}
  \label{fig:join_node}
\end{figure}

Consider a join node $i$ with children $j_1$ and $j_2$ in a nice tree
decomposition of a graph $G$ the interlace polynomial of which we want
to evaluate. By the properties of tree decompositions, this implies a
situation as depicted in Figure~\ref{fig:join_node}:
$G_{j_1}=G[V_{j_1}]$ and $G_{j_2}=G[V_{j_2}]$ are disjoint graphs with
a common extension $X_{j_1}=X_{j_2}=X_i$. $G_i=G[V_i]=G[V_{j_1}\uni
V_{j_2}]$ is the disjoint union of $G_{j_1}$ and $G_{j_2}$. Assume
that we have computed all parts (see Section~\ref{sec:ideas} and
\eqref{eq:bubble_term}) of the interlace polynomial of $G_{j_1}$ and
all parts of the interlace polynomial of $G_{j_2}$. From this we want
to compute the parts of the interlace polynomial of $G_i$.  Consider
one such part, say the one corresponding to some scenario $s$ of
$X_i$. Somehow we have to find out for which subgraphs\footnote{In
  fact induced subgraphs with self loops toggled at some vertices ---
  but we will ignore this detail for the rest of the section as it is
  not important to understand the idea.}  $G[V']$ of $G_i$ the
scenario of the extended graph $\eG{V'}{X_i}$ is $s$.  Fortunately,
these are exactly the subgraphs $G[V_1\uni V_2]$, $V_1\subs V_{j_1}$,
$V_2\subs V_{j_2}$, with the property that the ``join'' of the
scenario of $\eG{V_1}{X_{j_1}}$ and the scenario of
$\eG{V_2}{X_{j_2}}$ is $s$. This is guaranteed by the following lemma.

\begin{lem}[Join]
\label{lem:rank_join}
Let $G=(V,E)$ be a graph, $U\subs V$, and $s_1, s_2$ two scenarios of
$U$. Then there is a unique scenario $s_3$ of $U$ such that the
following holds: If $G[V_1]$ and $G[V_2]$ are disjoint subgraphs of
$G$ that may be extended by $U$ according to $G$, $\Sc(\eG{V_1}{U}) =
s_1$, and $\Sc(\eG{V_2}{U}) = s_2$, then $\Sc(\eG{V_1 \uni V_2}{U}) =
s_3$. Moreover, $s_3$ can be computed from $s_1, s_2$ and $G[U]$
within $\poly(|U|)$ steps.
\end{lem}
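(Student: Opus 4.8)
The plan is to prove Lemma~\ref{lem:rank_join} by analyzing what symmetric Gaussian elimination on $G[V_1 \uni V_2 \uni U]$ using $V_1 \uni V_2$ does, and showing the resulting matrix — hence the scenario $s_3$ — depends on the two inputs only through $s_1$, $s_2$ and $G[U]$. The key structural fact I would exploit is that $G[V_1]$ and $G[V_2]$ are disjoint (no edges between $V_1$ and $V_2$) and that $V_1 \cup V_2 < U$ in the vertex order, so elimination processes all of $V_1$, then all of $V_2$ (or interleaved, but the order within $V_1$ is the same as in the standalone elimination for $\eG{V_1}{U}$, and likewise for $V_2$).

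\textbf{Step 1: Elimination on $V_1$ is unaffected by the presence of $V_2$.} First I would argue that running symmetric Gaussian elimination on $G[V_1 \uni V_2 \uni U]$ restricted to the vertices of $V_1$ produces, on the $V_1 \times V_1$ and $U \times V_1$ and $U\times U$ blocks, exactly the same result as running it on $G[V_1 \uni U]$. This is because every pivot is chosen among edges inside $V_1$ (the definition of elimination \emph{using $V'$}), every column/row operation adds a $V_1$-column to another $V_1$- or $U$-column, and the $V_2$ vertices are never touched and never touch anything: by disjointness the $V_1 \times V_2$ block is zero throughout, so $V_2$-entries never influence the arithmetic on the $V_1 \cup U$ part. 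Hence after processing $V_1$, the $U \times V_1$ block has column space determined by $\Sc(\eG{V_1}{U}) = s_1$ (up to which specific columns we keep, but the \emph{span} is what matters), and the $U\times U$ block has been modified from $G[U]$ in a way recorded by the $\sUU{U}$-component of $s_1$. Symmetrically, processing $V_2$ afterwards modifies only the $V_2 \cup U$ part, contributing the span of $s_2$'s vector set to the $U\times V_2$ block and further updating the $U\times U$ block according to $s_2$'s second component relative to whatever $U\times U$ matrix it started from.

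\textbf{Step 2: Assemble $M'$ and read off the scenario.} After all of $V_1 \cup V_2$ is processed, $M'$ has the block form of Figure~\ref{fig:scenario_intuition01} with the $V'\times V'$ part being the (permutation-plus-zeros) matrices of $s_1$ and $s_2$ side by side, the $U \times V'$ part being $[M'_{UV_1} \mid M'_{UV_2}]$ with column space equal to $\mathrm{span}(\sUV{U}{V_1}) + \mathrm{span}(\sUV{U}{V_2})$ where these are the vector-set components of $s_1, s_2$, and the $U\times U$ block being the combined update of $G[U]$. Crucially these all depend only on $s_1, s_2, G[U]$. I would then \emph{define} $s_3 = (\sUV{U}{V'}, \sUU{U})$ by: take $\sUU{U}$ to be this combined $U\times U$ matrix, and take $\sUV{U}{V'}$ to be the minimal basis (per Definition~\ref{def:extension_scenario}'s greedy rule) of $\mathrm{span}(\sUV{U}{V_1}) + \mathrm{span}(\sUV{U}{V_2})$. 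This yields both existence and uniqueness, and since vector-space sums, spans and the greedy minimal-basis extraction are all $\poly(|U|)$ linear-algebra operations over $GF(2)$, the complexity claim follows.

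\textbf{The main obstacle} I anticipate is making Step 1 fully rigorous — precisely, showing that the $U\times U$ update contributed while processing $V_2$ depends only on $s_2$ and not on the intermediate $U\times U$ matrix left behind by $V_1$. One must check that the increments to $U\times U$ entries coming from $V_2$-pivots (governed by the formulas in Lemma~\ref{lem:edgeel_effect} and its self-loop analogue, applied with $a,b \in V_2$ and $x,y\in U$) involve only $V_2$-row entries of $U$-columns, which are themselves the unmodified original $G$-adjacencies between $V_2$ and $U$ (untouched by the $V_1$-phase), so the total $V_2$-contribution is an \emph{additive} correction independent of the $V_1$-phase; then $\sUU{U}$ of $s_3$ is $G[U]$ plus the $s_1$-correction plus the $s_2$-correction. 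A secondary subtlety is that Definition~\ref{def:extension_scenario} keeps a specific minimal basis rather than an arbitrary one, so I must confirm the statement is really about column \emph{spans} — which it is, since the span is an invariant of the elimination and the minimal basis is a deterministic function of the span — and that a change of kept basis in $s_1$ or $s_2$ does not change $s_3$.
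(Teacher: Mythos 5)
Your proposal follows essentially the same route as the paper's proof: exploit disjointness of $V_1$ and $V_2$ to argue that elimination steps with $V_1$-vertices and with $V_2$-vertices never interfere; observe that the $U\times(V_1\uni V_2)$ block of the result is just the concatenation of the two blocks obtained from the separate eliminations; and show that the effect on the $U\times U$ block is additive. The ``obstacle'' you flag is precisely what the paper handles with its ``write-only'' observation: all pivots are chosen among $V'$-vertices, so the current contents of the $U\times U$ block never influence which elimination steps are performed, and the $U\times U$ correction contributed by the $V_2$-phase is an additive quantity $\SUU{s_2}{U}-M(G[U])$ that can be read off from $s_2$ alone. Your variant of that argument (the increments are computed from $V_2$- and $U$-indexed entries, which the $V_1$-phase never touches) is a correct alternative phrasing of the same fact.

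One claim in your last paragraph is wrong and would bite you if you relied on it literally: ``the minimal basis is a deterministic function of the span'' is false under Definition~\ref{def:extension_scenario}, because the minimal basis is chosen greedily \emph{from the columns of the matrix}, not from all vectors of the column space. For instance the column sets $\{(1,1),(1,0)\}$ and $\{(0,1),(1,0)\}$ span the same space but yield different minimal bases. What is actually true — and what the paper uses — is that the minimal basis of the joint $U\times(V_1\uni V_2)$ block can be obtained by running the greedy selection on the finite set $\SUV{s_1}{U}{V_1}\uni\SUV{s_2}{U}{V_2}$ (the stored bases, not just their spans): any joint column outside this union is dominated by smaller vectors of the union, so the greedy over joint columns and the greedy over the union pick the same set. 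Since $s_1$ and $s_2$ carry the actual basis sets, your construction of $s_3$ is still computable in $\poly(|U|)$; just make sure you apply the greedy rule to $\SUV{s_1}{U}{V_1}\uni\SUV{s_2}{U}{V_2}$ rather than to the span.
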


\begin{figure}
  \centering
  \includegraphics{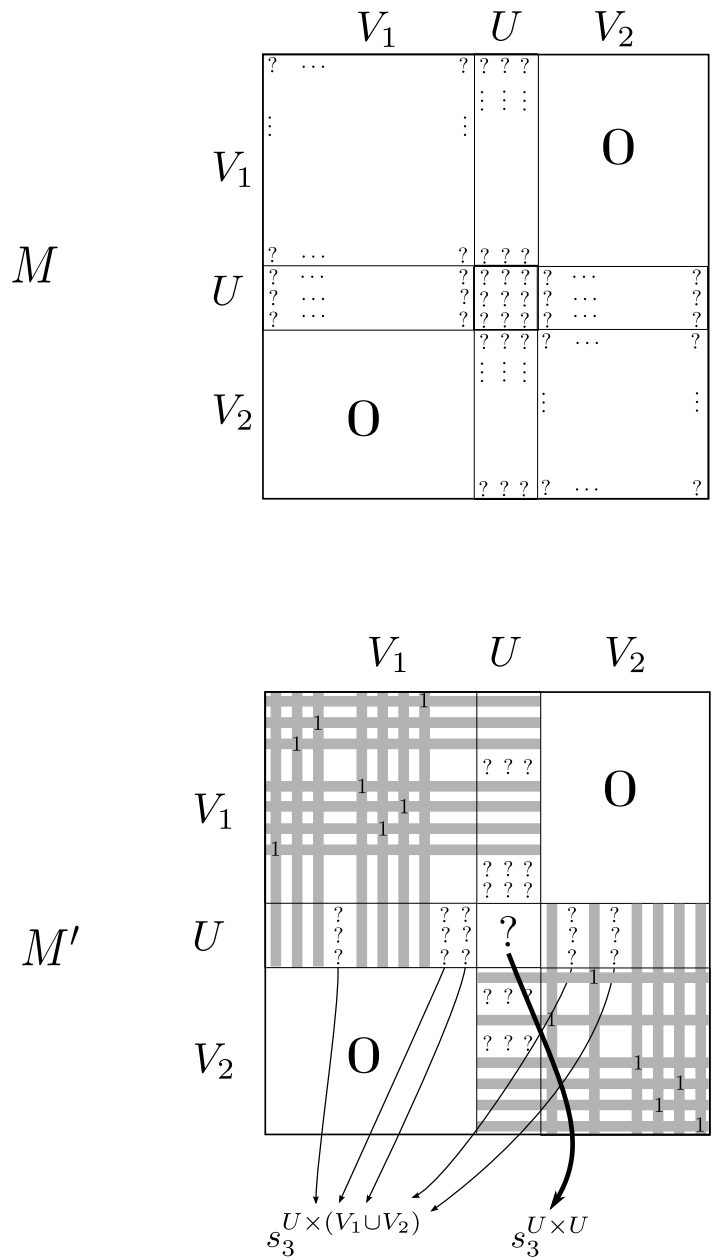}
  \caption{Effect of symmetric Gaussian elimination to gain the
    scenario of $\eG{V_1\duni V_2}{U}$. Entries with question marks
    are either $0$ or $1$. Empty entries are $0$.}
  \label{fig:Joinrank}
\end{figure}

\begin{proof} 
  We will apply Definition~\ref{def:extension_scenario} to determine
  $s_3$. We will see that $s_3$ is uniquely defined by $s_1$, $s_2$
  and $G[U]$, and can be computed from these within the claimed time
  bound. This will prove the lemma.

  Let $G_1=G[V_1]$ and $G_2=G[V_2]$. Let $M$ be the adjacency matrix
  of $G[V_1\uni V_2\uni U]$. As $G_1$ and $G_2$ are disjoint, $M$ has
  a form as depicted on the left hand side in
  Figure~\ref{fig:Joinrank}, the $V_1 \times V_2$ submatrix as well as
  the $V_2\times V_1$ submatrix of $M$ consists only of $0$s.

  By Definition~\ref{def:extension_scenario}, symmetric Gaussian
  elimination using $V_1\uni V_2$ has to be performed on $M$ to obtain
  $M'$, which is of the form depicted on the right hand side in
  Figure~\ref{fig:Joinrank} and from which $s_3$ can be read off.  Let
  us analyze a single elimination step occurring during the
  elimination process in detail, say eliminating with a self loop at a
  vertex $v \in V_1$.  One action in this step is that the $1$ in the
  $(v,v)$ entry will be used to eliminate another $1$ in the $v$-row
  by adding the $v$-column to the respective column $u$. Let us argue
  that this affects neither the $V_1\times V_2$ submatrix of $M$ nor
  the $V_2\times V_1$ submatrix of $M$. As $v\in V_1$, in the $v$-row
  the $V_2$-entries are already $0$. Thus we know that $u\not \in
  V_2$, i.e.  the $v$-column will be added to a column from $V_1\uni
  U$.  Thus, the $V_1\times V_2$ submatrix is not changed. Again as
  $v\in V_1$, the $V_2$-entries in the $v$-column are $0$ and addition
  of the $v$-column to any other column $u$ does not change the
  $V_2$-entries of column $u$. Thus, the $V_2\times V_1$ submatrix of
  $M$ is not changed.

  Analogous observations can be made for the role of columns and rows
  reversed (i.e.\ when adding the $v$-row to other rows to eliminate
  $1$s in the $v$-column), as well as for elimination steps using an
  edge between different vertices (instead of self loops). We conclude
  that symmetric Gaussian elimination steps with $V_1$-vertices affect
  only the $(V_1\uni U)\times (V_1 \uni U)$ submatrix of $M$, but not
  the $V_1\times V_2$ or $V_2\times V_1$ submatrix.  Analogously,
  elimination steps with $V_2$-vertices affect only the $(U\uni
  V_2)\times (U\uni V_2)$ submatrix of $M$.  Thus, except for the
  $U\times U$ submatrix, when performing symmetric Gaussian
  elimination on $M$ using $V_1\uni V_2$, the same things happen as
  when performing symmetric Gaussian elimination first on $G[V_1\uni
  U]$ using $V_1$ and then on $G[V_2\uni U]$ using $V_2$.  The only
  difference may be that depending on the vertex order elimination
  steps with $V_1$-vertices are interlaced with steps using $V_2$
  vertices. But we argued that $V_1$-elimination steps do not
  influence parts of $M$ which are relevant for $V_2$-elimination
  steps and vice versa, so this is not an issue.

  As elimination on $M$ using $V_1\uni V_2$ (yielding $M'$) on the one
  hand does the same as elimination on $G[V_1\uni U]$ using $V_1$
  (yielding, say, $M^{(1)}$) and elimination on $G[V_2\uni U]$ using
  $V_2$ (yielding, say, $M^{(2)}$) on the other hand, the $U\times
  (V_1\uni V_2)$ submatrix of $M'$ is just the union of
  $M^{(1)}_{UV_1}$, the $U\times V_1$ submatrix of $M_1$, and
  $M^{(2)}_{UV_2}$, the $U\times V_2$ submatrix of $M_2$. Recall that
  $\SUV{s_1}{U}{V_1}$ and $\SUV{s_2}{U}{V_2}$ are minimum bases of the
  column space of $M^{(1)}_{UV_1}$, $M^{(2)}_{UV_2}$, resp, taken from
  the columns of these matrices. To compute $\SUV{s_3}{U}{(V_1\uni
    V_2)}$, the minimum basis of the column space of the $U\times
  (V_1\uni V_2)$ submatrix of $M'$ taken from the columns of this
  matrix, we proceed in the following way: Start with the empty set
  and as long as possible add the minimum vector of $\SUV{s_1}{U}{V_1}
  \uni \SUV{s_2}{U}{V_2}$ which is not in the span of the so far
  collected vectors. This can be done in time polynomial in $|U|$
  using standard Gaussian elimination.

  The $U\times U$ submatrix is the only part of $M$ which is affected
  by both, eliminations with $V_1$-vertices and eliminations with
  $V_2$-vertices. However, the use of the $U\times U$ submatrix is
  ``write-only'' during the elimination process: Consider symmetric
  Gaussian elimination in general, say on some extended graph
  $G[V'\uni U]$ using $V'$. Recall that by Definition~\ref{def:sge}
  all the elimination steps will involve only vertices from $V'$ in
  the sense that the step is either $M\se v$ or $M\se vu$ with
  $u,v\in V'$. Thus, the contents of the $U\times U$ submatrix has no
  influence on what elimination steps will be performed. All that
  happens with this submatrix is that column/row vectors are added to
  it.

  Thus, the effect on the $U\times U$ submatrix of all the elimination
  steps during symmetric Gaussian elimination of $G[V_1\uni U]$ using
  $V_1$ can be described as adding a matrix, say $A_1$ to the
  adjacency matrix of $G[U]$. We can compute $A_1$ as $A_1 =
  \SUU{s_1}{U} - M(G[U])$, where $M(G[U])$ denotes the adjacency
  matrix of $G[U]$.  Analogously, we can compute $A_2$ which describes
  the effect of symmetric Gaussian elimination of $G[V_2\uni U]$ using
  $V_2$ on the $U\times U$ submatrix. Because of the ``write-only''
  property, the effect of symmetric Gaussian elimination of $M$ using
  $V_1\uni V_2$ on the $U\times U$ submatrix of $M$ can be described
  by $A_1+A_2$.  Thus we have $\SUU{s_3}{U}=M(G[U])+A_1+A_2$, which is
  the second component of $s_3$.
 \end{proof}

\begin{defi}
\label{def:scenario_join}
  In the situation of Lemma~\ref{lem:rank_join} we write $\st{s_1}{s_2}{G[U]}$
  for $s_3$.
\end{defi}

To handle join nodes of the tree decomposition we proved
Lemma~\ref{lem:rank_join}: from the scenario of two extended graphs
$\eG{V_1}{U}$ and $\eG{V_2}{U}$ with a common extension $U$ we can
compute the scenario of the joined extended graph $\eG{V_1\uni
  V_2}{U}$ (cf.\ Figure~\ref{fig:join}). To handle also introduce and
forget nodes we prove two more lemmas (cf.\
Figure~\ref{fig:introduce}, Figure~\ref{fig:forget}).

\begin{figure}
  \centering
  \includegraphics{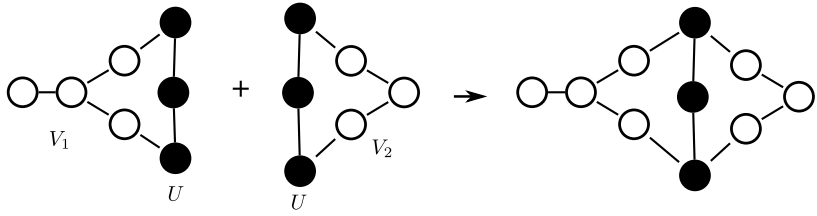}
  \caption{Joining the extended graphs $\eG{V_1}{U}$ and $\eG{V_2}{U}$.}
  \label{fig:join}
\end{figure}
\begin{figure}
  \centering
  \includegraphics{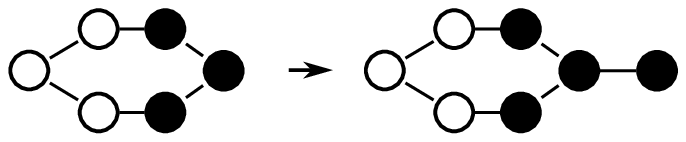}
  \caption{Adding a vertex to an extension.}
  \label{fig:introduce}
\end{figure}
\begin{figure}
  \centering
  \includegraphics{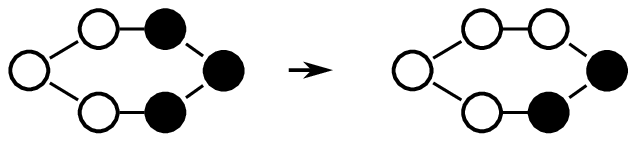}
  \caption{Transforming an extending vertex into a normal vertex.}
  \label{fig:forget}
\end{figure}

\begin{lem}[Introduce vertex]
\label{lem:rank_introduce}
Let $G=(V,E)$ be a graph, $U\subs V$, $s$ a scenario of $U$, $u\in
V\setminus U$.  Then there is a unique scenario $\s$ of $\UU=U\uni
\{u\}$ such that the following holds: If $G[V']$ may be extended by
$\UU$ according to $G$, $u$ is not connected to $V'$ in $G$, and
$\Sc(\eG{V'}{U}) = s$, then $\Sc(\eG{V'}{\UU}) = \s$. Moreover, $\s$
can be computed from $s$ and $G[\UU]$ in $\poly(|U|)$ steps.
\end{lem}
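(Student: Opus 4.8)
I want to prove Lemma~\ref{lem:rank_introduce} following the same template as the proof of Lemma~\ref{lem:rank_join}: appeal directly to Definition~\ref{def:extension_scenario}, trace what symmetric Gaussian elimination does, and read off that the result depends only on $s$ and $G[\UU]$. Let me think carefully about what actually changes when we go from extension $U$ to extension $\UU = U \cup \{u\}$.

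The key point is the hypothesis "$u$ is not connected to $V'$ in $G$". So in the adjacency matrix $M$ of $G[V' \cup \UU]$, the $\{u\} \times V'$ row and the $V' \times \{u\}$ column are entirely zero. The crucial vertex-order constraint is $v' < u'$ for all $v' \in V'$, $u' \in \UU$ — in particular $u$ is larger than everything in $V'$, so during symmetric Gaussian elimination "using $V'$" the vertex $u$ is never used as a pivot and never used to eliminate anything; it behaves exactly like any other vertex of $U$.

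Now compare two runs of symmetric Gaussian elimination, both using $V'$: one on $G[V' \cup U]$ (adjacency matrix $M_0$) and one on $G[V' \cup \UU]$ (adjacency matrix $M$). The set of vertices used for pivoting is the same, $V'$, and which elimination step is performed at each stage depends (by Definition~\ref{def:sge}) only on the self-loops and edges among $V'$-vertices, hence only on $G[V']$ — identical in both runs. The effect of each step on the $U \times V'$ and $U \times U$ submatrices is the same in both runs (the extra column/row indexed by $u$ is just carried along). So if $M_0'$ and $M'$ are the two resulting matrices, $M'$ restricted to $(V' \cup U) \times (V' \cup U)$ equals $M_0'$; the only genuinely new data is the $\{u\}$-column and $\{u\}$-row of $M'$. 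I need to identify these. During a step using self-loop at $v \in V'$ we add column $v$ to column $w$ whenever $m_{vw}=1$, i.e. whenever $w$ is a neighbour of $v$; since $u$ is not a neighbour of any $v \in V'$ (as $u \notin$ the neighbourhood of $V'$ — wait, I must be careful: $u$ could be a neighbour of a $V'$-vertex? No: the hypothesis says $u$ is not connected to $V'$). So the $u$-column is never chosen as a target of a column addition whose source is in $V'$, and the $u$-column is never a source. Hence the $u$-column of $M'$ (restricted to rows $U$) is unchanged: it is exactly the $\{u\} \times U$ entries of $G$, i.e. read off from $G[\UU]$. By symmetry the same for the $u$-row.

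**Assembling $\s$.** So $M'_{\UU V'}$ is obtained from $M_0'{}_{UV'}$ by adding one extra all-new row indexed by $u$ — but that row is the zero row, because $u$ has no neighbours in $V'$ and, as just argued, elimination using $V'$ never touches the $u$-row in the $V'$-columns. Therefore the column space of $M'_{\UU V'} \subseteq \zo^{\UU}$ is just the column space of $M_0'{}_{UV'} \subseteq \zo^U$ embedded into $\zo^{\UU}$ with a $0$ in coordinate $u$. Consequently the minimal basis $\sUV{\UU}{V'}$ is obtained from $\sUV{U}{V'}$ (the first component of $s$) by appending a $0$ in the $u$-coordinate of each vector — here I should double-check that the vertex order on vectors makes "append a zero at the top coordinate" preserve minimality, which it does since $u$ is the largest vertex, so lexicographic comparison is unaffected. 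And $\sUU{\UU}$ is the $\UU \times \UU$ matrix whose $U \times U$ block is $\sUU{U}$ (second component of $s$), whose $\{u\} \times U$ and $U \times \{u\}$ parts are read from $G[\UU]$, and whose $(u,u)$ entry is the self-loop status of $u$ in $G[\UU]$. All of this is manifestly a function of $s$ and $G[\UU]$ alone, computable in $\poly(|U|)$ time (just copying data and re-indexing), which gives uniqueness, the implication, and the complexity bound simultaneously.

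**Main obstacle.** I expect the only delicate point — the analogue of the "interlacing of elimination steps is harmless" argument in Lemma~\ref{lem:rank_join} — to be verifying rigorously that introducing the vertex $u$ into the matrix does not alter \emph{which} elimination steps get performed and does not alter their effect on the pre-existing submatrix. This is where the order constraint $v' < u$ and the disconnection hypothesis are both used, and it needs the same kind of step-by-step bookkeeping (isolated-vertex case, self-loop case, edge case, looped-neighbour case) as in the proof of Lemma~\ref{lem:edgeel_effect}. Everything after that — extracting $\sUV{\UU}{V'}$ and $\sUU{\UU}$ and checking minimality of the basis survives the re-indexing — is routine.
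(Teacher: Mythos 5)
Your proof is correct and follows the same approach as the paper's: the paper's own argument is a two-sentence observation that since $u$ is disconnected from $V'$, the new vertex simply contributes a zero $u$-coordinate to the vectors of $s^{U\times V'}$ and a fresh row/column (read off from $G[\widetilde{U}]$) to $s^{U\times U}$, and you have filled in exactly the verification that symmetric Gaussian elimination using $V'$ never touches the $u$-row/$u$-column that the paper leaves implicit.

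One small inaccuracy worth fixing: you justify preservation of the minimal basis by claiming "$u$ is the largest vertex, so lexicographic comparison is unaffected." That premise need not hold — Definition~\ref{def:extension_scenario} only requires $v' < u'$ for $v'\in V'$, $u'\in U$, and nothing forces the introduced vertex $u$ to be the maximum of $\widetilde{U}$; in the algorithm the order among bag vertices is determined by the order in which they are later forgotten, which can be arbitrary. The conclusion you want is still true, but the correct reason is simpler and does not rely on where $u$ sits in the order: every column of $M'_{\widetilde{U}V'}$ has a $0$ in its $u$-coordinate (because $u$ is disconnected from $V'$ and, as you already showed, elimination using $V'$ never alters that coordinate), so inserting this uniformly-zero coordinate at any position leaves all lexicographic comparisons between these columns unchanged, hence the greedily chosen minimal basis is the same modulo the extra zero entry.
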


\begin{proof} 
  As $u$ is not connected to $V'$, $\SUV{\s}{\UU}{V'}$ is
  $\sUV{U}{V'}$ with a zero component for $u$ added to all the basis
  vectors. Also, $\SUU{\s}{\UU}$ is just $\sUU{U}$ with a row and
  column added representing the neighbors of $u$ in $\UU$.
 \end{proof}

\begin{defi}
\label{def:scenario_introduce}
In the situation of Lemma~\ref{lem:rank_introduce} we write
$\sp{s}{u}{G[\UU]}$ for $\s$.
\end{defi}

Except for isolated vertices without self loops, every vertex has an
effect on the rank of the adjacency matrix \cite[Lemma 2, Lemma
  5]{arratia_two_var_interl}
(cf. Section~\ref{ssec:elimination_step}). The following lemma states
that this effect can be extracted from the scenario.
\begin{lem}[Forget vertex]
\label{lem:rank_forget}
Let $G=(V,E)$ be a graph, $u\in U\subs V$, $\UU=U\setminus \{u\}$,
$\VV=V'\uni \{u\}$, and $s$ a scenario of $U$.  Then there is a unique
scenario $\s$ of $\UU$ and $r \in \zot$, $n\in \{1,0,-1\}$ such that
the following holds: If $G[V']$ is a subgraph of $G$ that may be
extended by $U$ according to $G$, $u>v'$ for all $v'\in V'$, and
$\Sc(\eG{V'}{U})=s$, then $\Sc(\eG{\VV}{\UU}) = \s$ and the rank
(nullity) of the adjacency matrix of $G[\VV]$ equals the rank
(nullity, resp.) of the adjacency matrix of $G[V']$ plus $r$ ($n$,
resp.). Moreover, $\s$ and $r$ can be computed from $s$ and $G[U]$ in
$\poly(|U|)$ steps, and we have $n=1-r$.
\end{lem}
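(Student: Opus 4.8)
The plan is to prove Lemma~\ref{lem:rank_forget} by a direct application of Definition~\ref{def:extension_scenario}, analyzing what happens when we shift the single vertex $u$ from the extension side to the ``already forgotten'' side. Fix $V'$ with $\Sc(\eG{V'}{U})=s$ and $u>v'$ for all $v'\in V'$. Let $M$ be the adjacency matrix of $G[V'\uni U]=G[\VV\uni\UU]$ and let $M'$ be the result of symmetric Gaussian elimination on $M$ using $V'$; by definition of $s$, the $U\times V'$ submatrix of $M'$ has the structure of Figure~\ref{fig:scenario_intuition01}, with column space basis recorded in $\sUV{U}{V'}$, and its $U\times U$ submatrix equals $\sUU{U}$. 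To compute $\Sc(\eG{\VV}{\UU})$ we must instead perform symmetric Gaussian elimination using $\VV=V'\uni\{u\}$. Since the vertex order has $u$ greater than every vertex of $V'$ (this is exactly the hypothesis $u>v'$, which matches the order supplied by Algorithm~\ref{alg:order}), elimination using $\VV$ proceeds exactly as elimination using $V'$ for the first $|V'|$ steps, producing the same matrix $M'$, and then performs one further step using $u$.

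\textbf{The extra step.} The heart of the proof is to classify that final step according to the state of row/column $u$ in $M'$, restricted to the indices in $\VV=V'\uni\{u\}$ --- equivalently, according to the $(u,u)$-entry and the $u$-column of the $\VV\times\{u\}$ part of $M'$. By Definition~\ref{def:sge} there are exactly three cases. (i) If $M'$ has a self loop at $u$ (the $(u,u)$-entry of $\sUU{U}$ is $1$), we do $M'\se u$: the rank of $G[\VV]$ exceeds that of $G[V']$ by $r=1$. (ii) If there is no self loop at $u$ but $u$ has a neighbor in $V'$ in $M'$ --- i.e.\ the $\sUV{U}{V'}$-part of column $u$ is nonzero, so $u$ picks the minimum such neighbor $w\in V'$ --- we do $M'\se uw$, and $\rk(G[\VV])=\rk(G[V'])+2$, so $r=2$. (iii) Otherwise $u$ is isolated within $\VV$ in $M'$ (no self loop, no $V'$-neighbor): $u$ contributes nothing and $r=0$. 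In every case the nullity changes by $n=1-r$, since adding one vertex to an $m\times m$ matrix gives an $(m{+}1)\times(m{+}1)$ matrix whose nullity is $(m{+}1)-\rk$, and $\rk$ went up by $r$; this gives $n\in\{1,0,-1\}$. Crucially, which case we are in, the index $w$, and all the arithmetic depend only on $\sUU{U}$ and the vectors $\sUV{U}{V'}$, i.e.\ only on $s$ --- not on $G[V']$ --- and the whole determination takes $\poly(|U|)$ time via standard linear algebra over $\zo$.

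\textbf{Reading off $\s$.} It remains to extract the scenario $\s$ of $\UU$ from the matrix $M''$ obtained after this extra step, and to confirm it depends only on $s$ and $G[U]$. The new extension is $\UU=U\setminus\{u\}$, so $\s=(\SUV{s}{\UU}{\VV},\SUU{s}{\UU})$ is read off the $\UU\times\VV$ and $\UU\times\UU$ submatrices of $M''$. In case (iii), row/column $u$ of $M'$ was already zero off the diagonal, so $M''=M'$; the $\UU\times\UU$ block is just $\sUU{U}$ with row and column $u$ deleted, and the $\UU\times\VV$ block is the old $\UU\times V'$ block (a subset of rows of what $\sUV{U}{V'}$ spanned) with a zero column for $u$ appended. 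In cases (i) and (ii), the elimination step using $u$ (resp.\ $uw$) zeroes out the rest of row/column $u$ (resp.\ $u$ and $w$) while modifying the remaining entries by the explicit rank-one (resp.\ rank-two) update formulas of the lemmas in Section~\ref{ssec:elimination_step}; restricting those updates to the $\UU\times\UU$ and $\UU\times\VV$ blocks expresses $\SUU{s}{\UU}$ and a spanning set for the column space of the $\UU\times\VV$ block purely in terms of $\sUU{U}$, $\sUV{U}{V'}$ and $G[U]$. Choosing the minimal basis from that spanning set (as in the remark after Definition~\ref{def:extension_scenario}) then fixes $\SUV{s}{\UU}{\VV}$ uniquely, again computable in $\poly(|U|)$ time. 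Putting the cases together yields the unique $\s$, $r$, $n$ claimed.

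\textbf{Main obstacle.} The routine parts are the case split and the nullity bookkeeping; the step I expect to require genuine care is case (ii), verifying that after pivoting on the edge $uw$ the $\UU\times\VV$ submatrix of $M''$ --- whose column space we must describe --- is still determined by $s$ alone. One has to check that the columns indexed by $V'\setminus\{w\}$ get modified only by adding (multiples of) the $u$- and $w$-columns, both of which lie in the span recorded by $\sUV{U}{V'}$, so the column space of the new block is obtained from that of the old one by a controlled operation; and separately that deleting the now-dead columns $u,w$ and re-minimizing the basis is well-defined. Pinning down exactly which vectors survive, and that $r$ is independent of the (non-unique) choice of pivot neighbor $w$, is where the argument needs to be spelled out carefully.
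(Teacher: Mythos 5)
Your proof has a genuine gap that leads to a wrong case split and a wrong value of $r$ in one of the three cases.

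The claim in your ``Setting up'' paragraph --- that because $u$ is last in the vertex order, ``elimination using $\VV$ proceeds exactly as elimination using $V'$ for the first $|V'|$ steps, producing the same matrix $M'$, and then performs one further step using $u$'' --- is false. The vertex order only determines when $u$ itself is \emph{processed}; it does not prevent an earlier vertex $w\in V'$ from using $u$ as its pivot \emph{neighbor}. Concretely, Definition~\ref{def:sge} says that when processing $w$ with no self loop, we look for a neighbor of $w$ in the \emph{current} elimination set. If $w$ has no neighbor in $V'$ but has $u$ as a neighbor, then elimination using $V'$ skips $w$, whereas elimination using $\VV$ performs $M\se wu$ \emph{at $w$'s turn}, long before $u$ is reached. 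So the two processes already diverge inside $V'$, and the matrix that the $\VV$-elimination produces after the first $|V'|$ vertices is not $M'$.

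This matters because your case (i) and case (ii) are checked in the wrong order. If $(\sUU{U})_{uu}=1$ \emph{and} some basis vector of $\sUV{U}{V'}$ has a $1$ in the $u$-component, your proof falls into case (i) and outputs $r=1$; but the true elimination using $\VV$ hits the edge $wu$ first (for the case-(c) vertex $w$ whose column carries the $1$ in the $u$-row), consumes $u$ with an edge pivot, and gives $r=2$. A minimal counterexample: $V'=\{w\}$, $U=\{u,u'\}$, with edge $wu$ and a self loop at $u$. Then $G[V']$ has rank $0$ and $G[\VV]$ has adjacency matrix $\left(\begin{smallmatrix}0&1\\1&1\end{smallmatrix}\right)$, which has rank $2$, so $r=2$, not $1$. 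The paper's proof of Lemma~\ref{lem:rank_forget} therefore checks the ``basis vector with $1$ in the $u$-component'' condition \emph{first} (its case (1)) and only then the $(u,u)$-entry (its case (2)). Moreover, to read off $\s$ in that first case the paper does what you implicitly assume but never justify: it proves (Lemma~\ref{lem:uw_displace}) that the $wu$ pivot, which really happens in the middle of the $\VV$-elimination, can be \emph{deferred} to the very end without changing the final matrix. Your ``main obstacle'' paragraph worries about the right neighborhood of the argument (how the $\UU\times\VV$ block evolves under the edge pivot) but misidentifies the actual obstacle: it is the commutation of the $wu$ pivot past the remaining $V'$-eliminations, which is not routine and is exactly what Lemma~\ref{lem:uw_displace} is for.
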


\begin{figure}
  \centering
  \includegraphics{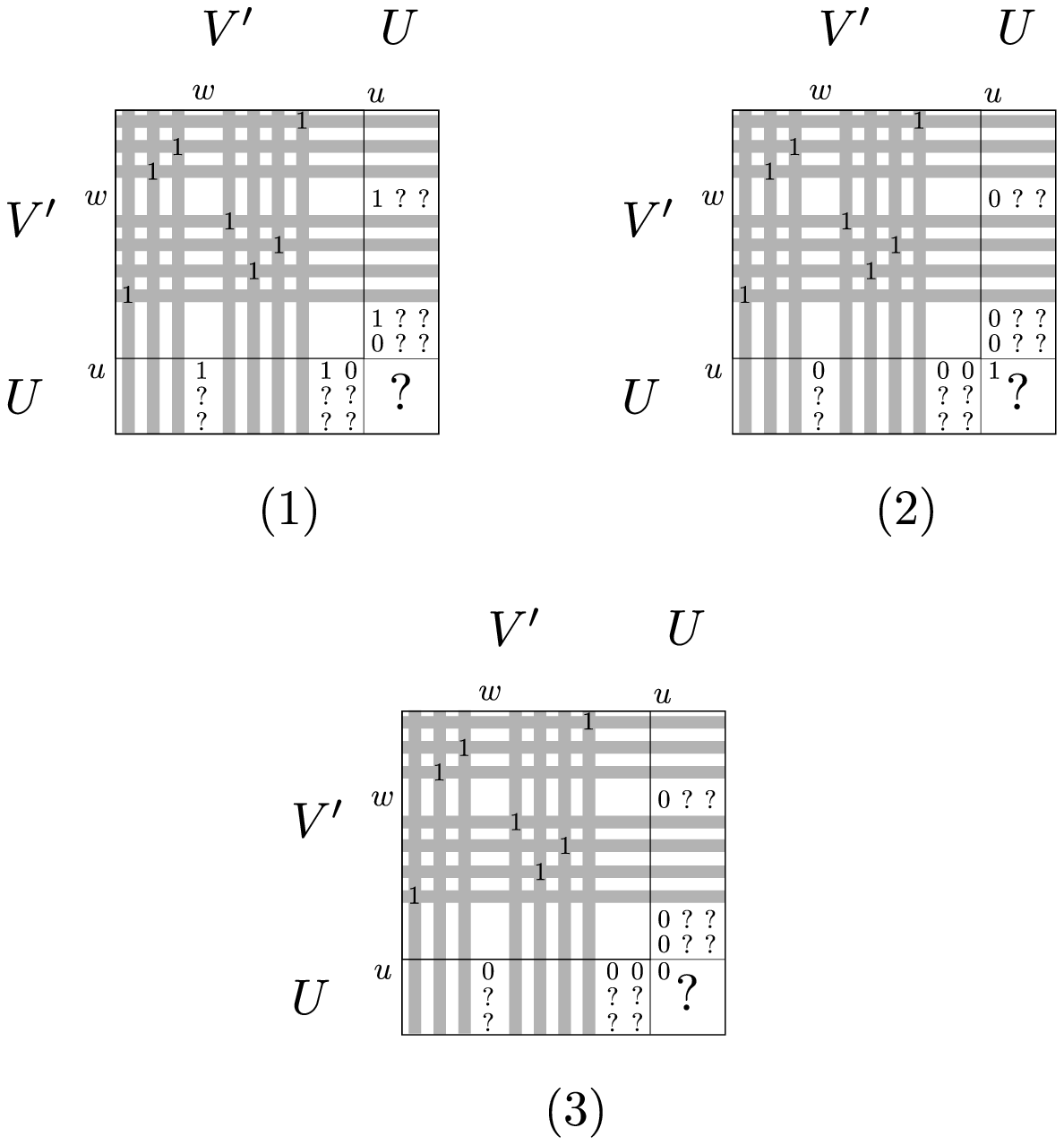}
  \caption{Cases when ``forgetting'' an extension vertex $u$. Entries
    with question marks are either $0$ or $1$. Empty entries are $0$.}
  \label{fig:Forgetrank}
\end{figure}

\begin{proof} 
  Consider the situation after symmetric Gaussian elimination on
  $G[V'\uni U] = G[\VV\uni \UU]$ using $V'$
  (Figure~\ref{fig:Forgetrank}). We distinguish three cases: (1) there
  is a basis vector of the $(U\times V')$ column space with a $1$ in
  the $u$-component, (2) there is no such basis vector, but the
  $(u,u)$-entry of the $U\times U$ submatrix equals 1, (3) neither
  case (1) nor (2).

  Let us first consider cases (2) and (3). As all $u$-components of
  the vectors in $\sUV{U}{V'}$ are zero, we know that symmetric Gaussian
  elimination on $G[\VV\uni \UU]$ using $\VV$ will consist of the
  following two stages: first, exactly the same operations will be
  performed as in symmetric Gaussian elimination on $G[V'\uni U]$
  using $V'$ (which will end up in the situations depicted in
  Figure~\ref{fig:Forgetrank} (2), (3)), and then elimination using
  vertex $u$ will be performed if possible.

  Thus, in case (3), $\s$ can be obtained from $s$ in the following
  way: remove the $u$ component of each vector of $\sUV{U}{V'}$ to
  gain $\SUV{\bar{s}}{\UU}{\VV}$. Let $a$ be the first column of
  $\sUU{U}$. Remove the first component of $a$. With standard Gaussian
  elimination, check in time $\poly(|U|)$ if $a$ is in the span of
  $\SUV{\bar{s}}{\UU}{\VV}$. If it is, let
  $\SUV{\s}{\UU}{\VV}=\SUV{\bar{s}}{\UU}{\VV}$, otherwise let
  $\SUV{\s}{\UU}{\VV}=\SUV{\bar{s}}{\UU}{\VV}\uni \{a\}$. Let
  $\SUU{\s}{\UU}$ be $\sUU{U}$ with first column and first row
  deleted. We have $r=0$ and $n=1$.

  In case (2), we first perform an elimination step with the 1 at the
  $(u,u)$-entry: let $\SUU{\bar{s}}{U} = \sUU{U}\se u$. Then we
  continue as in case (3) but with $\SUU{\bar{s}}{U}$ in the role of
  $\sUU{U}$. We have $r=1$ and $n=0$.

  The rest of this proof deals with case (1). Let $w\in V'$ be the
  vertex corresponding to the minimum vector of $\sUV{U}{V'}$ with a $1$
  in the $u$-component (cf.\ Figure~\ref{fig:Forgetrank} (1)). Compare
  symmetric Gaussian elimination on $G[V'\uni U]$ using $V'$ (which is
  performed to obtain $s$) to symmetric Gaussian elimination on
  $G[\VV \uni \UU]$ using $\VV$ (which is performed to obtain
  $\s$). Before these two processes reach $w$, they are equal, but
  from $w$ on they will differ: Using $V'$, the edge $uw$ will not be
  used for elimination and the process will continue with the next
  vertex in $V'$ immediately. Using $\VV$, the edge $uw$ will be used
  for elimination (which will not affect the $V'\times V'$ submatrix,
  but possibly change the contents of the $U\times (V'\uni U)$ and the
  $(V'\uni U) \times U$ submatrices).  Only after that, the process
  will continue with the next vertex in $\VV$. However, we will prove
  in Lemma~\ref{lem:uw_displace} that we can defer the elimination
  using edge $uw$ until all vertices of $V'$ have been proceeded and
  still obtain $\s$. Thus, $\s$ can be computed in the following way:
  perform the same steps as with symmetric Gaussian elimination on
  $G[V'\uni U]$ using $V'$. Then, simulate the effect of a symmetric
  Gaussian elimination step using edge $uw$ in a similar way as in
  cases (2) and (3).

  This simulation can be done as follows: Let $\vec{w}$ be the minimum
  vector of $\sUV{U}{V'}$ with the $u$-component equal to $1$. Let
  $\SUV{\bar{s}}{U}{V'} = \sUV{U}{V'}\setminus \{\vec{w}\}$ and
  $\SUU{\bar{s}}{U}=\sUU{U}$. For each row $i$, $i\neq u$, with the
  $\vec{w}_i=1$ simulate addition of column/row $u$ to column/row $i$ doing
  the following:
  \begin{enumerate}
  \item For each vector $\vec{c}$ of $\SUV{\bar{s}}{U}{V'}$, add
    component $u$ of $\vec{c}$ to component $i$ of $\vec{c}$.
  \item Change $\SUU{\bar{s}}{U}$ by first adding the $u$ column to
    the $i$ column and then, in the modified matrix, the $u$ row to
    the $i$ row.
  \end{enumerate}
  We have $\SUV{\s}{\UU}{\VV}=\SUV{\bar{s}}{U}{V'}$, and
  $\SUU{\s}{\UU}$ is $\SUU{\bar{s}}{U}$ with first column and first
  row removed. Note that after an elimination step using edge $wu$,
  the $u$ column/row will consist entirely of zeros (except at $(u,w)$
  and $(w,u)$). Thus, the first column of $\SUU{\bar{s}}{U}$ will be
  zero after the elimination with $wu$ and we do not need to
  incorporate it into $\SUV{\s}{\UU}{\VV}$.

  Finally note that we have $r=2$ and $n=-1$ in case (1).
  
\end{proof}

\begin{defi}
\label{def:scenario_forget}
In the situation of Lemma~\ref{lem:rank_forget} we write
$\sm{s}{u}{G[U]}$ for $\s$, $\dr{s}{u}{G[U]}$ for $r$, and
$\dn{s}{u}{G[U]}$ for $n$.
\end{defi}

The operation defined in Definition~\ref{def:scenario_forget} deletes
a vertex $u$ from a scenario in the sense that $u$ is deleted from the
extension but added to the graph being extended. We also need a
notation for deleting a vertex completely from a scenario, i.\ e.\
ignoring some vertex of the extension.

\begin{defi}
\label{def:scenario_ignore}
Let $s=(\sUV{U}{V'},\sUU{U})$ be a scenario of an extension $U$ and
$u\in U$. Then $\si{s}{u}$ is the scenario obtained from $s$ in the
following way: Delete the $u$-components from the elements of
$\sUV{U}{V'}$ to obtain $s_1$. Choose the minimum (according to the
vertex order) basis $s_1'$ for the span of $s_1$ from the elements of
$s_1$ using standard Gaussian elimination.  Delete the $u$-column and
$u$-row from $\sUU{U}$ to obtain $s_2$. We define
$\si{s}{u}=(s_1',s_2)$.
\end{defi}

The following lemma is used in the proof of
Lemma~\ref{lem:rank_forget}.

\begin{lem}
\label{lem:uw_displace}
Let $G=(V,E)$ be a graph, $u\in U\subs V$ and $G'=G[V']$ a subgraph of
$G$ which may be extended by $U$ and $u>v'$ for all $v'\in V'$. Let
$w$ be the minimum vertex of $V'$ and assume that $u$ is the minimum
neighbor of $w$ (which implies that $w$ has no neighbor in $V'$). Let
$V''=V'\uni\{u\}$, $\VV = V'\setminus \{w\}$ and $M$ be the adjacency
matrix of $G[V'\uni U]$ (cf.\ Figure~\ref{fig:displace}).  Then the
following two sequences of operations on $M$ lead to the same result:
\begin{enumerate}
\item Symmetric Gaussian elimination on $M$ using $V''$, i.e.\ first
  the elimination step using edge $wu$ and then the elimination steps
  using $\VV$.
\item Symmetric Gaussian elimination on $M$ using $V'$ (i.e.\ the
  elimination steps using $\VV$, as $w$ has no neighbor in $V'$) and
  after that, on the result, the elimination step using edge $wu$.
\end{enumerate}
\end{lem}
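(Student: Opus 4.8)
The plan is to show that the two elimination sequences perform, step for step, the same matrix operations — only in a different temporal order — and that this reordering is harmless because the operations involved commute on the relevant entries. First I would observe that, since $w$ is the minimum vertex of $V'$ and $u>v'$ for all $v'\in V'$, in sequence~(1) the vertex $w$ is the very first vertex processed among $V''=V'\uni\{u\}$, and its minimum neighbor is $u$; hence the first step of sequence~(1) is exactly $M\se wu$. Because $w$ has no neighbor in $V'$, the edge $wu$ is genuinely the chosen pivot here (there is no smaller neighbor of $w$ in $V''$ other than comparing with $u$, and $u$ is assumed minimum). In sequence~(2), by the same token $w$ contributes \emph{no} elimination step at all when processing $V'$ (it is isolated within $V'$ and has no self loop, so it falls into case~(1) of the symmetric Gaussian elimination step, the ``do nothing'' case). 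So in both sequences the steps performed are: the $\VV$-steps and one $\se wu$ step; the only difference is whether $\se wu$ comes first or last.

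Next I would track what the $\se wu$ step does and argue it commutes past the $\VV$-steps. By Lemma~\ref{lem:edgeel_effect}, eliminating with edge $wu$ only adds (XOR) multiples of the $w$-column/row and $u$-column/row to other columns/rows, governed by the entries in rows/columns $w$ and $u$; and crucially it never touches any entry whose row \emph{and} column both lie outside $\{w,u\}$ in a way that depends on later pivots — more precisely, the coefficients $m_{wx},m_{yu},m_{yw},m_{ux},m_{uu}$ that control the update of entry $(y,x)$ live in rows/columns $w$ and $u$. The $\VV$-elimination steps, on the other hand, use pivots strictly inside $\VV=V'\setminus\{w\}$, and — this is the key point — none of these pivots ever selects $w$ or $u$ as a pivot vertex, nor uses the $w$- or $u$-row/column as the source of an elimination: every $\VV$-step is of the form $M\se v$ or $M\se vv'$ with $v,v'\in\VV$, so it adds the $v$- or $v'$-column/row (not the $w$- or $u$-one) to other columns/rows. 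I would then verify the commutation at the level of individual entries: for an entry $(y,x)$ with $y,x\in\VV\uni U$, the $\se wu$ update and a $\VV$-step update are both XOR-additions of other entries, and since the $\VV$-step source columns/rows ($v,v'\in\VV$) are disjoint from $\{w,u\}$, the two updates act on disjoint ``channels'' and can be swapped; for an entry in row or column $w$ or $u$, one checks that the $\VV$-steps leave rows/columns $w$ and $u$ essentially untouched except possibly in positions indexed by $\VV$, and these are precisely the positions that $\se wu$ subsequently zeroes out or that do not feed back into the $\VV$-pivots. The self-loop subtlety ($m_{uu}$ entering the formula of Lemma~\ref{lem:edgeel_effect}) is handled the same way: $(u,u)$ is not altered by any $\VV$-step.

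The main obstacle I expect is bookkeeping the \emph{order-sensitivity within} a single $\se wu$ step (column additions followed by the matching row additions, in the two-stage procedure of Section~\ref{ssec:elimination_step}) against the interleaving of $\VV$-steps, and in particular arguing that after the $\VV$-steps the $w$-row/column still has the right shape (a single $1$ at position $u$, zeros elsewhere within $V'\uni U$ except where $\VV$-pivots may have written into the $U$-part) so that the deferred $\se wu$ produces literally the same matrix. I would handle this by induction on the number of $\VV$-steps: maintain the invariant that, at every intermediate stage, the $w$-row and $w$-column agree in the two computations up to the not-yet-performed $\se wu$ operation, using Lemma~\ref{lem:edgeel_effect} (and its self-loop analogue) to check the single-step commutation, and appealing to the ``write-only on the untouched part'' style of argument already used in the proof of Lemma~\ref{lem:rank_join}. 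Once the invariant survives all $\VV$-steps, applying $\se wu$ at the end reproduces exactly the matrix obtained by applying it first, which is the claim.
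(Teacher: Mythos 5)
Your high-level roadmap is the same as the paper's: observe that in sequence~(1) the first step is $M\se wu$ and in sequence~(2) $w$ contributes no step at all, so both sequences perform exactly the $\VV$-steps plus one $\se wu$, and therefore it suffices to prove that $\se wu$ commutes with a single $\VV$-elimination step; then conclude by iterating. That reduction and the reliance on Lemma~\ref{lem:edgeel_effect} are exactly the paper's moves, and the overall plan is sound.

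However, the commutation itself is where your argument has a genuine gap. You justify the single-step commutation by a ``disjoint channels'' intuition: the $\se wu$ update is driven by rows/columns $w$ and $u$, the $\VV$-step by rows/columns $a,b\in\VV$, so the updates should not interfere. This is \emph{not} true. The $\VV$-step $\se ab$ does modify the $u$-row and $u$-column (e.g.\ $m_{yu}^{ab}=m_{yu}+m_{au}m_{yb}+m_{ya}m_{bu}+m_{au}m_{ya}m_{bb}$ is generally different from $m_{yu}$), and those modified values are precisely what the deferred $\se wu$ step reads as its coefficients $m_{yu},m_{ux},m_{uu}$. Conversely, $\se wu$ can modify entries in rows/columns $a,b$ via the $u$-row/column. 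So the two steps interact, and commutation is not a consequence of disjointness. In particular your specific claim that ``$(u,u)$ is not altered by any $\VV$-step'' is wrong: expanding Lemma~\ref{lem:edgeel_effect} gives $m_{uu}^{ab}=m_{uu}+m_{au}m_{bb}$ over $GF(2)$, which changes whenever $a$ is a neighbor of $u$ and $b$ has a self loop. What \emph{does} hold, and what the paper's proof actually leans on, is the weaker but correct fact that $\se ab$ does not touch the $w$-row or $w$-column (because $w$ has no neighbor in $V'$, so $m_{aw}=m_{bw}=0$), and that $\se wu$ does not touch the $(b,b)$-entry. Those two facts make the coefficients $m_{wx},m_{yw}$ and $m_{bb}$ invariant under the swap, and the remaining equality $m_{yx}^{ab,wu}=m_{yx}^{wu,ab}$ is then a nontrivial polynomial identity over $GF(2)$ that has to be verified by fully expanding both sides with Lemma~\ref{lem:edgeel_effect} and cancelling. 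Your proposal does gesture at doing this calculation, but if you had tried to carry it out believing the disjointness heuristic, you would have been surprised by the cross-terms; the proof needs the explicit expansion, not a structural shortcut.
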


\begin{figure}
  \centering
  \includegraphics{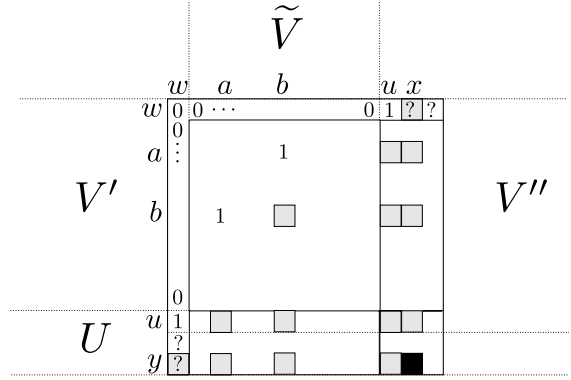}
  \caption{Symmetric Gaussian elimination using $\VV$ (including steps
    such as eliminating with edge $ab$) and eliminating with edge $wu$
    can be swapped without changing the result. Empty entries and
    entries with a question mark are either $0$ or $1$.}
  \label{fig:displace}
\end{figure}

\begin{proof}
  Elimination with edge $wu$ will add the $u$ column (row, resp.) to
  all columns (rows, resp.) which have a $1$ in the $w$-row (column,
  resp.), and will then eliminate any remaining $1$ in the $u$ column
  (row, resp.). As the $V'$-part of the $w$ row (column., resp.) is
  entirely zero, this has no influence on the $\VV\times \VV$
  submatrix of $M$. Thus, the only difference between 1.\ and 2.\ is
  whether the elimination step using edge $wu$ is performed before or
  after symmetric Gaussian elimination using $\VV$. Also, it is
  enough to consider the $U$-columns and $U$-rows of $M$. We will
  ignore the $V'\times V'$ submatrix of $M$ in the following.

  We will prove the following: every elimination step using an edge
  $ab$ (a self loop at $a$, resp.) in $\VV$ can be swapped with
  elimination using $wu$, i.e.\ the results of $\se{ab}\se{wu}$ and
  $\se{wu}\se{ab}$ ($\se{a}\se{wu}$ and $\se{wu}\se{a}$, resp.) are
  equal. Applying this observation repeatedly proves the lemma. We
  only prove the case of an edge $ab$ in $\VV$, the case of a self
  loop at $a$ in $\VV$ can be dealt with similarly.

  Let $ab$ an edge in $\VV$. First, let us consider the column and
  rows of $a$, $b$, $w$ and $u$. It is not hard to see, that, no
  matter whether we use first $ab$ for elimination and then $wu$ or
  vice versa, in the end these columns will consist entirely of zeros,
  except for $(u,w)$, $(w,u)$, $(a,b)$, $(b,a)$. Thus, it is
  sufficient to examine the effect of both elimination steps on
  entries $(y,x)$ with $\{x,y\}\cut \{a,b,u,w\} = \emptyset$, cf.\
  Figure~\ref{fig:displace}.

  Let $M^{ab}=M\se ab$ be $M$ after the elimination step using edge
  $ab$. Analogously we let $M^{wu}=M\se wu$, as well as
  $M^{ab,wu}=M\se ab \se wu$ and $M^{wu,ab}=M\se wu \se ab$. We use
  small $m$ to denote the entries of these matrices. For instance,
  $m_{yx}^{ab,wu}$ denotes the entry in row $y$ and column $x$ of
  $M^{ab,wu}$.

  \begin{description}
  \item[Case ``$ab$ first''.] By Lemma~\ref{lem:edgeel_effect} we have
    \[m_{yx}^{ab} = m_{yx} + m_{ax} \cdot m_{yb} + m_{ya} \cdot m_{bx}
    + m_{ya}\cdot m_{ax}\cdot m_{bb}.\] By
    Lemma~\ref{lem:edgeel_effect} again, the final value of entry
    $(y,x)$ is
\begin{equation*}
  \label{eq:abfirst_final}
  m_{yx}^{ab,wu} = m_{yx}^{ab} + m_{wx}^{ab}\cdot m_{yu}^{ab} +
  m_{yw}^{ab}\cdot m_{ux}^{ab} + m_{wx}^{ab} \cdot m_{yw}^{ab} \cdot m_{uu}^{ab},
\end{equation*}
where $m_{wx}^{ab} = m_{wx}$ and $m_{yw}^{ab}=m_{yw}$, as the
elimination using edge $ab$ does not affect column/row $w$
(cf.\ Figure~\ref{fig:displace}).  Furthermore, we have
\begin{align*}
m_{yu}^{ab} &= m_{yu} + m_{au} \cdot m_{yb} + m_{ya} \cdot m_{bu} + m_{au} \cdot m_{ya} \cdot m_{bb}, \\
m_{ux}^{ab} &= m_{ux} + m_{ax} \cdot m_{ub} + m_{ua} \cdot m_{bx} + m_{ax} \cdot m_{ua} \cdot m_{bb}, \\
m_{uu}^{ab} &= m_{uu} + m_{au} \cdot m_{ub} + m_{ua} \cdot m_{bu} + m_{au} \cdot m_{ua} \cdot m_{bb},
\end{align*}
once more by Lemma~\ref{lem:edgeel_effect}.

  \item[Case ``$wu$ first''.] Here we have
\begin{equation*}
  \label{eq:uwfirst_final}
  m_{yx}^{wu,ab} = m_{yx}^{wu} + m_{ax}^{wu}\cdot m_{yb}^{wu} +
  m_{ya}^{wu}\cdot m_{bx}^{wu} + m_{ax}^{wu}\cdot m_{ya}^{wu} \cdot m_{bb}^{wu},
\end{equation*}
where $m_{bb}^{wu}=m_{bb}$, as the entry $(b,b)$ is not affected by
edge elimination using edge $wu$. For the remaining values we have by
Lemma~\ref{lem:edgeel_effect}:
\begin{align*}
  m_{yx}^{wu} &= m_{yx} + m_{wx} \cdot m_{yu} + m_{yw} \cdot m_{ux} + m_{wx} \cdot m_{yw} \cdot m_{uu}, \\
  m_{ax}^{wu} &= m_{ax} + m_{wx} \cdot m_{au}, \\
  m_{yb}^{wu} &= m_{yb} + m_{yw} \cdot m_{ub}, \\
  m_{ya}^{wu} &= m_{ya} + m_{yw} \cdot m_{ua}, \\
  m_{bx}^{wu} &= m_{bx} + m_{wx} \cdot m_{bu}.
\end{align*}
\end{description}

An easy calculation yields that $m_{yx}^{wu,ab}=m_{yx}^{ab,wu}$, which
completes the proof.
 \end{proof}

\section{The Algorithm}
\label{sec:algo}
\label{sec:blubble_term}

Algorithm~\ref{alg:main} evaluates the interlace polynomial using a
tree decomposition. The input for the algorithm is $G=(V,E)$, the
graph of which we want to evaluate the interlace polynomial, and a
nice tree decomposition $(\{X_i\}_I, (I,F))$ of $G$ with $O(n)$ nodes,
$n=|V|$. In Section~\ref{ssec:treedecomp} we discussed how to obtain a
nice tree decomposition. Let $k-1$ be the width of the tree
decomposition, i.e.\ $k$ is the maximum bag size.

\begin{algorithm}
\caption{Evaluating the interlace polynomial using a tree decomposition.}
\label{alg:main}
\label{alg:parts}
\begin{algorithmic}[1]
  \Input Graph $G$, nice tree decomposition $(\{X_i\}_i, (I, F))$ of
  $G$, $k$ such that any bag $X_i$ of the tree decomposition contains
  at most $k$ vertices
\label{alg:main:emptyroot}
  \State \Call{SupplyVertexOrder}{} \Comment{Algorithm~\ref{alg:order}}
\label{alg:main:order}
\label{alg:main:m1}
\ForAll {nodes $i$ of the tree decomposition, in the order they appear in bottom-up traversal}
\ForAll {$D\subs X_i$}
          \If {$i$ is a leaf}
            \State $S(i,D, ((),())) \gets 1$
\label{alg:parts:leaf}
\label{alg:main:leaf}
          \ElsIf {$i$ is a join node}
            \State \Call{Join}{$i, D$}
          \ElsIf {$i$ is an introduce node}
            \State \Call{Introduce}{$i, D$}
          \ElsIf {$i$ is a forget node}
            \State \Call{Forget}{$i, D$}
          \EndIf
\label{alg:main:compparts}
\label{alg:main:m2}
\EndFor
\EndFor
\State return $S(\mathrm{root}, \emptyset, ((),()))$ \Comment
{$X_\mathrm{root} = \emptyset$}
\label{alg:main:return}
\end{algorithmic}
\end{algorithm}

\subsection{Interlace Polynomial Parts}

Algorithm~\ref{alg:main} essentially traverses the tree decomposition
bottom-up and computes parts $S(i,D,s)$ of the interlace polynomial
for each node $i$. For a node $i$, $D\subs X_i$, and a scenario $s$ of
$X_i$, one such part is defined in the following way:
\begin{equation}
  \label{eq:bubble_term}
    S(i, D, s) = \sum_{A,B} x_A y_B u^{\rk((G_i\slt B)[A\uni B])} v^{\n((G_i\slt B)[A\uni B])},
\end{equation}
where the summation extends over all $A, B \subs V_i$ with $A\cut B =
\emptyset$ and \[\Sc(\EG{G'}{A\uni B}{X_i}) = s,\quad G'=G\slt (B\uni
D).\] Recall that $V_i$ is the set of vertices which have been
forgotten below node $i$. Thus, $S(i,D,s)$ is the part of the
interlace polynomial of $G[V_i]$ corresponding to $D$ and $s$.


For every leaf $i$ of the tree decomposition we have $V_i=\emptyset$
and also $X_i=\emptyset$. Thus, in Line~\ref{alg:main:leaf} of
Algorithm~\ref{alg:main} we have $D=\emptyset$.  Trivially,
$\Sc(\eG{\emptyset}{\emptyset})$ is the empty scenario. Thus, we have
$S(i,\emptyset,((),()))=1$ if $i$ is a leaf.

At the root node $r$ the bag $X_r$ is empty and all vertices have been
forgotten, i.e.\ $V_r=V$. There is only one part left, $S(r,
\emptyset, ((), ())$, and this is just the interlace polynomial of
$G$.

\subsection{Join Nodes}

Join nodes are handled by Algorithm~\ref{alg:join}. The correctness
follows from
\begin{lem}
\label{lem:sum_join}
  Let $i$ be a join node with children $j_1$ and $j_2$,
  $D\subs X_i$ and $s$ a scenario of $X_i$. Then
  \begin{equation}
    \label{eq:sum_join}
    S(i,D, s) = \sum_{s_1, s_2} S(j_1, D, s_1) S(j_2, D, s_2),
  \end{equation}
  where the summation extends over all scenarios $s_1, s_2$ of $X_i$
  such that \[\st{s_1}{s_2}{G\slt D [X_i]} = s.\]
\end{lem}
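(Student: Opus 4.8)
The plan is to unwind the definition \eqref{eq:bubble_term} of $S(i,D,s)$ at the join node $i$ and match it term-by-term against the product of the corresponding definitions at the children $j_1$ and $j_2$. First I would use the structure of a join node: $V_i = V_{j_1}\,\dot\cup\, V_{j_2}$ with $X_i = X_{j_1} = X_{j_2}$, and $G_i = G[V_i]$ is the disjoint union of $G_{j_1}$ and $G_{j_2}$ (this is exactly the picture in Figure~\ref{fig:join_node}). Hence every pair $A,B\subs V_i$ with $A\cut B=\emptyset$ decomposes uniquely as $A = A_1\,\dot\cup\, A_2$, $B = B_1\,\dot\cup\, B_2$ with $A_1,B_1\subs V_{j_1}$ and $A_2,B_2\subs V_{j_2}$, and conversely any such choice on the two sides glues to a valid pair on $V_i$. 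This sets up a bijection between the index set of the sum \eqref{eq:sum_join} (after expanding the two factors $S(j_1,D,\cdot)$ and $S(j_2,D,\cdot)$) and a subset of the index set of the left-hand sum.

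Next I would check that the monomials agree. On the monomial side, $x_A y_B = x_{A_1} x_{A_2} y_{B_1} y_{B_2}$ splits multiplicatively, so the $x,y$ parts are immediate. For the $u,v$ parts I need the rank (and nullity) of $(G_i\slt B)[A\uni B]$ to split as the sum of the ranks of $(G_{j_1}\slt B_1)[A_1\uni B_1]$ and $(G_{j_2}\slt B_2)[A_2\uni B_2]$. This is just block-diagonality: since $G_{j_1}$ and $G_{j_2}$ are disjoint and self-loop toggling is a local operation, the adjacency matrix of $(G_i\slt B)[A\uni B]$ is the block-diagonal matrix with the two smaller adjacency matrices as blocks, so its rank is the sum of the two ranks, and likewise for the nullity (which is additive here because the ambient dimension $|A\uni B| = |A_1\uni B_1| + |A_2\uni B_2|$ is additive too). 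Thus the exponents of $u$ and of $v$ add, matching the product $u^{\rk}v^{\n}\cdot u^{\rk}v^{\n}$ from the two factors.

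The remaining and genuinely substantive point is the scenario constraint. In the left-hand sum we restrict to pairs $A,B$ with $\Sc(\EG{G'}{A\uni B}{X_i}) = s$ where $G' = G\slt(B\uni D)$; in the right-hand side we restrict to $\Sc(\EG{G_1'}{A_1\uni B_1}{X_{j_1}}) = s_1$, $\Sc(\EG{G_2'}{A_2\uni B_2}{X_{j_2}}) = s_2$ with $G_\ell' = G\slt(B_\ell\uni D)$, and sum over all $s_1,s_2$ with $\st{s_1}{s_2}{G\slt D[X_i]} = s$. I would note that $V_i' := A\uni B$ decomposes as $(A_1\uni B_1)\,\dot\cup\,(A_2\uni B_2)$ and that the graph $G\slt(B\uni D)$ restricted to this set together with $X_i$ is exactly the setting of Lemma~\ref{lem:rank_join}: the two pieces $G\slt(B\uni D)[A_1\uni B_1]$ and $G\slt(B\uni D)[A_2\uni B_2]$ are disjoint subgraphs with common extension $X_i$ (note $B_1\uni D$ and $B_2\uni D$ toggle the same self loops inside each piece as $B\uni D$ does, since $B_2\cut V_{j_1} = \emptyset$ and symmetrically), and the graph induced on the extension is $G\slt(B\uni D)[X_i]$. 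But $X_i\cut B = \emptyset$ and $X_i\cut D$ need not be empty — here I'd observe $G\slt(B\uni D)[X_i] = G\slt D[X_i]$ because $B\cut X_i=\emptyset$, which is why the join operator in \eqref{eq:sum_join} uses $G\slt D[X_i]$. Then Lemma~\ref{lem:rank_join} says: $\Sc(\EG{G'}{A\uni B}{X_i})$ is determined by, and equals, $\st{\,\Sc(\EG{G_1'}{A_1\uni B_1}{X_i})\,}{\,\Sc(\EG{G_2'}{A_2\uni B_2}{X_i})\,}{G\slt D[X_i]}$. Consequently the left constraint $\Sc(\cdots)=s$ holds if and only if the scenarios $s_1,s_2$ of the two halves satisfy $\st{s_1}{s_2}{G\slt D[X_i]}=s$ — which is precisely the index condition of the right-hand sum. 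Putting the bijection, the monomial match, and this scenario equivalence together yields \eqref{eq:sum_join}.

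I expect the main obstacle to be the bookkeeping around the self-loop toggling set $B\uni D$: one must verify carefully that restricting $G\slt(B\uni D)$ to $A_\ell\uni B_\ell$ gives the same graph as $G_{j_\ell}\slt(B_\ell\uni D')$ for the appropriate $D' = D\cut V_{j_\ell}$ used implicitly in the definition of $S(j_\ell,D,s_\ell)$ — and to reconcile the fact that $S(j_\ell,D,\cdot)$ is defined with the full $D\subs X_i$, not $D\cut V_{j_\ell}$, using that $D\subs X_i$ is disjoint from $V_{j_\ell}$ so toggling at $D$-vertices has no effect on the induced subgraph on $A_\ell\uni B_\ell\subs V_{j_\ell}$ but does matter for the scenario computation on the extension $X_i$ — which is exactly handled by carrying $G\slt D[X_i]$ into the $\st{}{}{}$ operator. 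Once this is laid out cleanly the rank additivity and the monomial split are routine, and Lemma~\ref{lem:rank_join} does the real work.
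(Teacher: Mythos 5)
Your proof is correct and follows essentially the same approach as the paper: decompose each admissible $(A,B)$ into $(A_1,B_1)$ on $V_{j_1}$ and $(A_2,B_2)$ on $V_{j_2}$, use that $G'[A\uni B]$ is the disjoint union of the two pieces, and invoke Lemma~\ref{lem:rank_join} to translate the scenario constraint. The paper's proof is considerably terser (it says ``it is not hard to argue''), while you explicitly verify the monomial split, the rank/nullity additivity, and the self-loop bookkeeping showing $G\slt(B\uni D)[X_i]=G\slt D[X_i]$ --- details the paper leaves to the reader.
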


\begin{proof}
  Recall \eqref{eq:bubble_term} for node $i$. Every admissible $A,B$
  give rise to $A_1=A\cut V_{j_1}$, $A_2=A\cut V_{j_2}$, $B_1 = B\cut
  V_{j_1}$, $B_2 = B\cut V_{j_2}$. $G'[A\uni B]$ is the disjoint union
  of $G'[A_1\uni B_1]$ and $G'[A_2 \uni B_2]$. (These graphs are
  subgraphs of the ones depicted in Figure~\ref{fig:join_node}.)

  We can apply Lemma~\ref{lem:rank_join} with $G'$ in the role of $G$,
  $A_1\uni B_1$ in the role of $V_1$ and $A_2 \uni B_2$ in the role of
  $V_2$. This implies that $A\uni B$ takes the role of $V_1 \uni V_2$.
  Using this it is not hard to argue that every admissible $(A,B)$ in
  \eqref{eq:bubble_term} corresponds to one pair $((A_1,
  B_1),(A_2,B_2))$ of the expanded version of \eqref{eq:sum_join}.
 \end{proof}

\begin{algorithm}
\caption{Computing the parts at a join node.}
\label{alg:join}
\begin{algorithmic}[1]
  \Procedure{Join}{$i$, $D$}
  \ForAll {scenarios $s$ for $|X_i|$ vertices}
\label{alg:join:init}
  \State {} \Comment {i.e., enumerate all pairs $s=(\sUV{X_i}{V'},
    \sUU{X_i})$ with $\sUV{X_i}{V'}$ being a list of linearly
    independent \\ \hspace{2cm} vectors from $\zo^{X_i}$ and $\sUU{X_i}$ a symmetric
    $X_i\times X_i$ matrix with entries from $\zo$ \\ \hspace{2cm} 
    -- cf.\ Definition~\ref{def:scenario}} \State $S(i, D, s) \gets 0$
  \EndFor \State $(j_1, j_2) \gets (\text{left child of $i$},
  \text{right child of $i$})$ \ForAll {scenarios $s_1, s_2$ for
    $|X_i|$ vertices}
\label{alg:join:loop}
    \State $s\gets \st{s_1}{s_2}{G\slt D [X_i]}$ \Comment{Definition~\ref{def:scenario_join}}
\label{alg:join:conv}
      \State $S(i,D,s) \gets S(i,D,s) + S(j_1, D, s_1)\cdot S(j_2, D, s_2)$
  \EndFor
  \EndProcedure
\end{algorithmic}
\end{algorithm}

\subsection{Introduce Nodes}

Introduce nodes are handled by Algorithm~\ref{alg:introduce}, which is
based on

\begin{lem}
\label{lem:sum_introduce}
Let $i$ be an introduce node with child $j$ and $X_i=X_j\uni \{a\}$.
Let $D\subs X_i$ and $s$ a scenario of $X_i$. Let
$D'=D\setminus\{a\}$. Then one of the following cases applies:
  \begin{itemize}
  \item If there is
    a scenario $s'$ of $X_j$ with $\sp{s'}{a}{G\slt D [X_i]} = s$, then
    we have $S(i,D,s)= S(j, D', s')$.
  \item Otherwise, $S(i, D, s)=0$.
  \end{itemize}
\end{lem}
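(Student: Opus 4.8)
The plan is to realise $S(i,D,s)$ and, in the first case, $S(j,D',s')$ as sums of the \emph{same} monomials over the \emph{same} index set of pairs $(A,B)$, and in the second case to show that this index set is empty. Two facts about the tree decomposition are needed. Since $a\in X_i\setminus X_j$ and $i$ is the father of $j$, the connectivity axiom forces $a$ to occur in no bag of the subtree rooted at $j$; a short computation with the definition of $V_i$ then gives $V_i=V_j$, hence $G_i=G_j$, and the ``edge in a bag'' axiom (again using that $a$ occurs in no bag below $j$) shows that $G$ has no edge between $a$ and any vertex of $V_i$. Because $G_i=G_j$, the two sums \eqref{eq:bubble_term} defining $S(i,D,s)$ and $S(j,D',s')$ both range over pairs $A,B\subs V_i=V_j$ with $A\cut B=\emptyset$, and the monomial $x_Ay_Bu^{\rk((G_i\slt B)[A\uni B])}v^{\n((G_i\slt B)[A\uni B])}$ attached to $(A,B)$ is literally the same in both; so everything reduces to comparing the two scenario side conditions.

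To relate the side conditions, fix an admissible $(A,B)$ and put $G'=G\slt(B\uni D)$ and $G''=G\slt(B\uni D')$. As $a\notin V_i$ we have $a\notin A\uni B$, and $a\notin X_j$, so $G'$ and $G''$ induce the same graph on $(A\uni B)\uni X_j$ (they can differ only at the self loop of $a$, which lies outside this set); hence $\Sc(\EG{G'}{A\uni B}{X_j})=\Sc(\EG{G''}{A\uni B}{X_j})$, a scenario I shall denote $s'(A,B)$. Now I would apply Lemma~\ref{lem:rank_introduce} with $G'$ in the role of $G$, $X_j$ in the role of $U$, $a$ in the role of $u$, $A\uni B$ in the role of $V'$, and $s'(A,B)$ in the role of $s$: the hypotheses hold because $a$ is not connected to $A\uni B$ in $G'$ (toggling self loops does not affect adjacency of distinct vertices), because $(A\uni B)\cut X_i=\emptyset$, and because the vertex order puts all of $V_i$ below $X_i$. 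Its conclusion, read through Definition~\ref{def:scenario_introduce}, is $\Sc(\EG{G'}{A\uni B}{X_i})=\sp{s'(A,B)}{a}{G'[X_i]}$, and $G'[X_i]=(G\slt D)[X_i]$ since toggling self loops at $B\subs V_i$ leaves $X_i$ untouched while $D\subs X_i$; so $\Sc(\EG{G'}{A\uni B}{X_i})=\sp{s'(A,B)}{a}{(G\slt D)[X_i]}$.

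Finally I would use that the map $t\mapsto\sp{t}{a}{(G\slt D)[X_i]}$ is injective on scenarios of $X_j$: by the explicit description in the proof of Lemma~\ref{lem:rank_introduce} one recovers $t$ from $\sp{t}{a}{\cdot}$ by deleting the $a$-component from each basis vector and the $a$-row and $a$-column from the matrix. Hence a pair $(A,B)$ satisfies the side condition of $S(i,D,s)$, i.e.\ $\Sc(\EG{G'}{A\uni B}{X_i})=s$, exactly when $\sp{s'(A,B)}{a}{(G\slt D)[X_i]}=s$. If some scenario $s'$ of $X_j$ has $\sp{s'}{a}{(G\slt D)[X_i]}=s$, it is the unique such scenario, the condition becomes $s'(A,B)=s'$, i.e.\ $\Sc(\EG{G''}{A\uni B}{X_j})=s'$, which is precisely the side condition defining $S(j,D',s')$; since the index set and the monomials already agree, $S(i,D,s)=S(j,D',s')$. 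If no such $s'$ exists, no admissible $(A,B)$ survives and $S(i,D,s)=0$. The part I expect to be the main nuisance is the routine bookkeeping of the self loop toggling ($D$ versus $D'$, and $B\uni D$ versus $B\uni D'$) together with extracting $V_i=V_j$ and the isolation of $a$ from $V_i$ out of the tree decomposition axioms --- these are exactly what let Lemma~\ref{lem:rank_introduce} be applied uniformly to every term of the sum.
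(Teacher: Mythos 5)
Your proof is correct and follows the same route as the paper: identify the summation domains of $S(i,D,s)$ and $S(j,D',s')$ via $V_i=V_j$, apply Lemma~\ref{lem:rank_introduce} term by term (using that $a$ has no neighbour in $V_i$), and use the injectivity of $t\mapsto\sp{t}{a}{\cdot}$ to match up the scenario side conditions. The paper's own proof is terser and leaves implicit several of the points you spell out, notably that $G\slt(B\uni D)$ and $G\slt(B\uni D')$ give the same $X_j$-scenario and that the introduce map is injective; your version is a legitimate, fully detailed rendering of the same argument.
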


\begin{proof}
  Assume there is some $(A,B)$ such that $\Sc(\EG{G'}{A\uni B}{X_i}) =
  s$. Let $s'=\Sc(\EG{G'}{A\uni B}{X_j})$. By
  Lemma~\ref{lem:rank_introduce} it follows $s=\sp{s'}{a}{G'[X_i]}$.
  Conversely, Lemma~\ref{lem:rank_introduce} also guarantees that for
  all $(A,B)$ with $\Sc(\EG{G'}{A\uni B}{X_j}) = s'$ and
  $\sp{s'}{a}{G'[X_i]}=s$ we have $\Sc(\EG{G'}{A\uni B}{X_i}) = s$.
 \end{proof}

\begin{algorithm}
\caption{Computing the parts at an introduce node.}
\label{alg:introduce}
\begin{algorithmic}[1]
  \Procedure{Introduce}{$i$, $D$}
  \ForAll {scenarios $s$ for $|X_i|$ vertices}
\label{alg:introduce:init}
    \State $S(i,D,s)\gets 0$
  \EndFor
  \State $j\gets \text{child of $i$}$
  \State $a \gets \text{vertex being introduced in $X_i$}$
  \ForAll {scenarios $s'$ for $|X_j|$ vertices}
\label{alg:introduce:loop}
    \State $s\gets \sp{s'}{a}{G\slt D[X_i]}$ \Comment{Definition~\ref{def:scenario_introduce}}
\label{alg:introduce:conv}
      \State $S(i, D, s) \gets  S(j, D\setminus \{a\}, s')$
    \EndFor
  \EndProcedure  
\end{algorithmic}
\end{algorithm}

\subsection{Forget Nodes}

Finally, let us consider Algorithm~\ref{alg:forget}, which handles
forget nodes. As $\dn {s'} a {G'}$ may be $-1$ in
Lines~\ref{alg:forget:deltan1} and \ref{alg:forget:deltan2}, we have
to assume $v\neq 0$. (The case $v=0$ is discussed in
Section~\ref{ssec:full_rank}.) Algorithm~\ref{alg:forget} is based on
\begin{lem}
\label{lem:sum_forget}
  Let $i$ be a forget node with child $j$ and $X_j=X_i \uni \{a\}$.
  Let $D\subs X_i$, $D'=D\uni \{a\}$ and $s$ a scenario of $X_i$.
  Then
  \begin{equation}
    \label{eq:sum_forget}
    \begin{split}
      S(i,D,s)= & \sum_{s'\in \S_\rmi} S(j, D, s') \\
                & + \sum_{s'\in \S_{\rmf}} x_a u^{\dr{s'}{a}{G\slt D[X_j]}} v^{\dn{s'}{a}{G\slt D[X_j]}} S(j, D, s') \\
                & + \sum_{s'\in \S_{\rmf'}} y_a u^{\dr{s'}{a}{G\slt D'[X_j]}} v^{\dn{s'}{a}{G\slt D'[X_j]}} S(j, D', s'),
    \end{split}
  \end{equation}
  where
  \begin{eqnarray*}
    \S_\rmi &=& \{ s'\ |\ \text{$s'$ scenario of $X_j$ with $\si{s'}{a}=s$}\}, \\
    \S_{\rmf} &=& \{ s'\ |\ \text{$s'$ scenario of $X_j$ with $\sm{s'}{a}{G\slt D [X_j]} = s$} \}, \\
    \S_{\rmf'} &=& \{ s'\ |\ \text{$s'$ scenario of $X_j$ with $\sm{s'}{a}{G\slt D'[X_j]}= s$} \}.
  \end{eqnarray*}
\end{lem}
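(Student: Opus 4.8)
The plan is to expand the definition~\eqref{eq:bubble_term} of $S(i,D,s)$ and classify the admissible pairs $(A,B)$ according to whether the forgotten vertex $a$ lies in $A$, in $B$, or in neither. Since $V_i = V_j \setminus \{a\}$ but the sum in $S(i,D,s)$ ranges over subsets of $V_i$ while the sums on the right range over subsets of $V_j$, the three cases correspond exactly to the three terms on the right-hand side of~\eqref{eq:sum_forget}. Concretely: given $A, B \subseteq V_i$ with $A \cap B = \emptyset$ and $\Sc(\EG{G'}{A\uni B}{X_i}) = s$ (where $G' = G\slt(B\uni D)$), the graph $G_i = G[V_i]$ is just $G_j = G[V_j]$ with $a$ removed, and the induced subgraph $(G_i \slt B)[A\uni B]$ equals $(G_j\slt B)[A\uni B]$ since $a\notin A\uni B$. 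The question is how $(A,B)$ arises from data at node $j$, i.e.\ from a triple $(A^*, B^*, s')$ with $A^*,B^* \subseteq V_j$ and $\Sc(\EG{G''}{A^*\uni B^*}{X_j}) = s'$ for the appropriate $G''$.

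First I would treat the case $a \notin A^* \uni B^*$. Here $A^* = A$, $B^* = B$, the self-loop-toggle set is $B\uni D$ in both cases, and $G'[X_j] = G'[X_i \uni \{a\}]$ simply restricts to $G'[X_i]$ after deleting $a$; the relevant identity is $\Sc(\eG{A\uni B}{X_i}) = \si{\Sc(\eG{A\uni B}{X_j})}{a}$, which is precisely what Definition~\ref{def:scenario_ignore} encodes (ignoring $a$ from the extension does not change the rank of $G[A\uni B]$, only the bookkeeping of the scenario). Hence these pairs contribute $\sum_{s'\in\S_\rmi} S(j,D,s')$ with no extra monomial factor, matching the first line. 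Second, the case $a \in A^*$: then $A^* = A\uni\{a\}$, $B^* = B$, the toggle set is still $B\uni D$, so $G'' = G\slt(B\uni D) = G\slt D[\cdots]$ with the same $D$, and $a$ moves from the extension $X_j$ into the extended graph. By Lemma~\ref{lem:rank_forget} (applied with $U = X_j$, $\UU = X_i$, $V' = A\uni B$, $\VV = A\uni B\uni\{a\}$), the scenario transforms as $\sm{s'}{a}{G''[X_j]}$, the rank of $(G_j\slt B)[A^*\uni B]$ exceeds that of $(G_i\slt B)[A\uni B]$ by exactly $\dr{s'}{a}{G''[X_j]}$, and similarly the nullity by $\dn{s'}{a}{\cdot}$; the factor $x_a$ accounts for $a\in A$. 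This yields the second line, with $G\slt D[X_j]$ as claimed. Third, the case $a\in B^*$: now $B^* = B\uni\{a\}$, $A^* = A$, and crucially the toggle set becomes $B^*\uni D = B\uni\{a\}\uni D = B \uni D'$ where $D' = D\uni\{a\}$, so the relevant graph is $G\slt D'[X_j]$; again Lemma~\ref{lem:rank_forget} gives the scenario update $\sm{s'}{a}{G\slt D'[X_j]}$ and the rank/nullity shifts $\dr{\cdot}{\cdot}{\cdot}$, $\dn{\cdot}{\cdot}{\cdot}$, and the factor is $y_a$ since $a\in B$. This is the third line. Summing over the three disjoint cases and checking that the map $(A,B,\text{case}) \mapsto (A^*,B^*,s')$ is a bijection onto the admissible triples on the right completes the proof.

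I expect the main subtlety to be the correct handling of the self-loop-toggle sets and the $D$ versus $D'$ distinction: one must verify that when $a\in B$ the toggling of the self loop at $a$ is exactly the difference between using $D$ and using $D'$, so that Lemma~\ref{lem:rank_forget} is invoked with the graph $G\slt D'[X_j]$ (whose restriction after forgetting $a$ is $G\slt D[X_i]$ as needed for a valid node-$i$ scenario $s$), whereas when $a\in A$ or $a\notin A\uni B$ the toggle set is unchanged and one uses $G\slt D[X_j]$. A second point requiring care is the hypothesis of Lemma~\ref{lem:rank_forget} that $a > v'$ for all $v' \in V'$: this is guaranteed because $a$ is the vertex forgotten at node $i$, and by Algorithm~\ref{alg:order} the vertices of $V_i$ — all forgotten strictly earlier — receive smaller numbers, while all remaining vertices of $A\uni B \subseteq V_j = V_i\uni\{a\}$ are exactly $V_i$. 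Everything else is the routine verification that the classification is exhaustive and injective, which is straightforward once the bijection is set up.
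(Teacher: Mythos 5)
Your overall plan is correct and matches the paper's proof exactly: expand \eqref{eq:bubble_term} for $S(i,D,s)$, split the admissible pairs $(A,B)$ into the three cases $a\notin A\cup B$, $a\in A$, $a\in B$, and in the latter two cases invoke Lemma~\ref{lem:rank_forget} for the scenario transition and the rank/nullity increments, paying attention to $D$ versus $D'$ in the $a\in B$ case because of the toggled self loop. The treatment of the hypothesis $a>v'$ via Algorithm~\ref{alg:order} is also the right thing to check.

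However, you have the inclusion between $V_i$ and $V_j$ backwards, and this propagates through the whole write-up. At a forget node, $a\in X_j$ but $a\notin X_i$, so $a$ has \emph{just been forgotten} at $i$, which by the definition $V_\ell=(\bigcup_{\ell'\text{ below }\ell}X_{\ell'})\setminus X_\ell$ means $a\in V_i$ and $a\notin V_j$; that is, $V_i=V_j\cup\{a\}$, not $V_i=V_j\setminus\{a\}$ as you wrote. Consequently your statements ``$G_i$ is $G_j$ with $a$ removed'', ``$A^*=A\cup\{a\}$'', ``$B^*=B\cup\{a\}$'', and ``the rank of $(G_j\slt B)[A^*\cup B]$ exceeds that of $(G_i\slt B)[A\cup B]$'' are all reversed. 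The correct picture is: the pairs $(A,B)$ for $S(i,D,s)$ live in $V_i$ and \emph{may} contain $a$; you classify them by $a$'s membership, set $A^*=A\setminus\{a\}$, $B^*=B\setminus\{a\}\subseteq V_j$, and then $x_A=x_a\,x_{A^*}$, $\rk_i=\rk_j+\dr{s'}{a}{\cdot}$, etc. Your case~1 is also problematic under your (incorrect) assumption $a\in V_j$: then $S(j,D,s')$ would sum over pairs containing $a$ as well, so equating that whole sum with the $a\notin A\cup B$ contribution would overcount. Once the containment is fixed, this issue disappears because $a$ can never lie in $A^*\cup B^*\subseteq V_j$, and the first sum is exactly right. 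Similarly, in case~3 the toggle-set identity you want is $B\cup D=B^*\cup D'$, which holds precisely because $B=B^*\cup\{a\}$ and $D'=D\cup\{a\}$; your computation reaches the correct conclusion, but only because a second reversal there cancels the first. In short: the argument's skeleton is the paper's, but the direction of all the $\{a\}$-additions/removals needs to be flipped for the proof to actually hold together.
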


\begin{proof}
  We use \eqref{eq:bubble_term} again. Let $(A,B)$ be
  admissible. There are three cases: (1) $a\not \in A\uni B$, (2)
  $a\in A$ and (3) $a\in B$. In case (1), the term corresponding to
  $(A,B)$ is contained in the first sum in \eqref{eq:sum_forget}. In
  case (2) we obtain the term corresponding to $(A,B)$ from the second
  sum in \eqref{eq:sum_forget}, where we use
  Lemma~\ref{lem:rank_forget} and multiply by $x_a$ to represent the
  fact that $a$ is in $A$. We also multiply by some power of $u$ and
  $v$ depending on the rank (nullity, resp.)  difference with
  vs.\ without $a$ in the extension.  Case (3) is similar, but we also
  have to use $D'$ instead of $D$ as in this case $a$ belongs to $B$
  and thus the self loop at $a$ is toggled.   \end{proof}

\begin{algorithm}
\caption{Computing the parts at a forget node.}
\label{alg:forget}
\begin{algorithmic}[1]
  \Procedure{Forget}{$i$, $D$}
  \ForAll {scenarios $s$ for $|X_i|$ vertices}
\label{alg:forget:init}
    \State $S(i, D, s) \gets 0$
  \EndFor
  \State $j\gets \text{child of $i$}$
  \State $a \gets \text{vertex being forgotten in $X_i$}$
  \ForAll {scenarios $s'$ for $|X_j|$ vertices}
\label{alg:forget:loop}
    \State $s \gets \si{s'}{a}$ \Comment{Definition~\ref{def:scenario_ignore}}
\label{alg:forget:conv0}
    \State $S(i,D,s) \gets S(i,D,s) + S(j,D,s')$
    \State $G'\gets G\slt D [X_j]$
    \State $s\gets \sm{s'}{a}{G'}$ \Comment{Definition~\ref{def:scenario_forget}}
\label{alg:forget:conv1}
    \State $S(i,D,s) \gets S(i,D,s)
        + x_a u^{\dr{s'}{a}{G'}} v^{\dn{s'}{a}{G'}} S(j,D,s')$
\label{alg:forget:deltan1}
    \State $D'\gets D \uni \{a\}$
    \State $G'\gets G\slt D'[X_j]$
    \State $s\gets\sm{s'}{a}{G'}$
\label{alg:forget:conv2}
    \State $S(i,D,s) \gets S(i,D,s)
        + y_a u^{\dr{s'}{a}{G'}} v^{\dn{s'}{a}{G'}} S(j,D',s')$
\label{alg:forget:deltan2}
  \EndFor
  \EndProcedure
\end{algorithmic}
\end{algorithm}

\subsection{Running Time}

We start with a nice tree decomposition with $O(n)$ nodes. Recall that
$k$ is the maximum bag size of the tree decomposition. To obtain the
vertex order (Algorithm~\ref{alg:order}) $O(n)\cdot \poly(k)$ steps
are sufficient.

The running time of Algorithm~\ref{alg:main} can be analyzed as
follows. The $i$ loop is executed $O(n)$ times, as there are $O(n)$
nodes in the tree decomposition. There are at most $2^k$ sets $D\subs
X_i$ for every node $i$.  There are at most $2^{(3k+1)k/2}$ scenarios
for $k$ vertices (Lemma~\ref{lem:scenario_bound}).  The join case
(Algorithm~\ref{alg:join}) sums over \emph{pairs} of scenarios and
thus dominates the running time of the introduce
(Algorithm~\ref{alg:introduce}) and forget
(Algorithm~\ref{alg:forget}) case. In the join case, we have to sum
over at most $(2^{(3k+1)k/2})^2$ pairs $(s_1, s_2)$. Converting the
scenarios (Line~\ref{alg:join:conv} of Algorithm~\ref{alg:join},
Line~\ref{alg:introduce:conv} of Algorithm~\ref{alg:introduce}, and
Lines~\ref{alg:forget:conv0}, \ref{alg:forget:conv1} and
\ref{alg:forget:conv2} of Algorithm~\ref{alg:forget}) takes time
polynomial in $k$, as we have shown in Section~\ref{sec:converting}.
Thus, the running time of Algorithm~\ref{alg:main} is at most
\[O(n) \cdot 2^k \cdot (2^{(3k+1)k/2})^2 \cdot \poly(k), \] if we
assume that arithmetic operations such as addition and multiplication
(of numbers) can be performed in one time step. The degree of the
interlace polynomial is at most $n$ in every variable (cf.\
Definition~\ref{def:interlace_polynom}). This leads to the following
result.
\begin{thm}
\label{thm:algo}
Let $G=(V,E)$ be a graph with $n$ vertices. Let a nice tree
decomposition of $G$ with $O(n)$ nodes and width $k$ be given, as well
as numbers $u$, $v$, $v\neq 0$, and, for each $a\in V$, $x_a$ and
$y_a$. Then Algorithm~\ref{alg:main} evaluates the multivariate
interlace polynomial $C(G)$ at $((x_a)_{a\in V},(y_a)_{a\in V},u,v)$
using $2^{3k^2+O(k)}\cdot n$ arithmetic operations. If the bit length
of $u, v$, and $x_a, y_a, a\in V$, is at most $\ell$, the operands
occurring during the computation are of bit length $O(\ell n)$.
\end{thm}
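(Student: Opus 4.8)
The plan is to prove correctness by a bottom-up induction over the nice tree decomposition and then to read off the running-time and bit-length bounds from the ingredients already assembled. The induction hypothesis is: once Algorithm~\ref{alg:main} has processed node $i$, the value $S(i,D,s)$ equals the right-hand side of \eqref{eq:bubble_term} for every $D\subs X_i$ and every scenario $s$ of $X_i$. For a leaf $i$ we have $V_i=X_i=\emptyset$, so $D=\emptyset$, the only scenario is $((),())$, and \eqref{eq:bubble_term} ranges over $A=B=\emptyset$ alone, with value $1$, which is exactly what Line~\ref{alg:main:leaf} assigns. For a join node the step is Lemma~\ref{lem:sum_join}: Algorithm~\ref{alg:join} initialises all slots to $0$ and then, looping over pairs $(s_1,s_2)$, adds $S(j_1,D,s_1)S(j_2,D,s_2)$ to the slot indexed by $s=\st{s_1}{s_2}{G\slt D[X_i]}$, so after accumulation $S(i,D,s)$ is precisely \eqref{eq:sum_join}, and the hypothesis applied to $j_1,j_2$ closes the case. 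The introduce node is Lemma~\ref{lem:sum_introduce} and the forget node is Lemma~\ref{lem:sum_forget}, the three contributions accumulated over all $s'$ in Algorithm~\ref{alg:forget} reproducing the three sums of \eqref{eq:sum_forget}; the hypothesis $v\neq0$ is needed precisely to give meaning to $v^{\dn{s'}{a}{\cdot}}$ when $\dn{s'}{a}{\cdot}=-1$ (case~(1) of Lemma~\ref{lem:rank_forget}). One small point is that in the introduce case the map $s'\mapsto\sp{s'}{a}{G\slt D[X_i]}$ is injective (its inverse deletes the zero $u$-component of every basis vector and the $u$-row and $u$-column of $\sUU{\UU}$, cf.\ the proof of Lemma~\ref{lem:rank_introduce}), so the plain assignment in Algorithm~\ref{alg:introduce} never writes a slot inconsistently.

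Next I would verify that the returned value is $C(G)$. At the root $r$ we have $X_r=\emptyset$ and $V_r=V$, so \eqref{eq:bubble_term} sums over all disjoint $A,B\subs V$, and the side condition $\Sc(\EG{G'}{A\uni B}{X_r})=((),())$ is vacuous, since extending any graph by the empty set yields the empty scenario. With $G_r=G$ and $G'=G\slt B$ (as $D=\emptyset$) this gives $S(r,\emptyset,((),()))=\sum_{A\cut B=\emptyset}x_Ay_Bu^{\rk((G\slt B)[A\uni B])}v^{\n((G\slt B)[A\uni B])}=C(G)$ by Definition~\ref{def:interlace_polynom}.

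For the running time, the outer loop visits $O(n)$ nodes, each with at most $2^k$ subsets $D$. The join case dominates: it iterates over pairs of scenarios for at most $k$ vertices, of which by Lemma~\ref{lem:scenario_bound} there are fewer than $(2^{(3k+1)k/2})^2=2^{3k^2+k}$, and each pair costs $\poly(k)$ to compute $\st{s_1}{s_2}{G\slt D[X_i]}$ (Lemma~\ref{lem:rank_join}) plus one multiplication and one addition of evaluated parts. The introduce and forget loops range over single scenarios and, together with the $\poly(k)$-time conversions $\si{s'}{a}$ and $\sm{s'}{a}{\cdot}$ (Lemmas~\ref{lem:rank_introduce} and~\ref{lem:rank_forget}) and the scenario enumerations/initialisations, are of smaller order. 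Altogether this is $O(n)\cdot2^k\cdot2^{3k^2+k}\cdot\poly(k)=2^{3k^2+O(k)}\cdot n$ arithmetic operations.

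For the bit length I would argue that every $S(i,D,s)$ is, by the induction hypothesis, a subsum of \eqref{eq:bubble_term} for $G_i$, hence a sum of at most $3^{|V_i|}\le3^n$ monomials, each of degree at most $n$ in every indeterminate and carrying a negative exponent only on $v$; evaluated at inputs of bit length $\le\ell$, such a monomial has bit length $O(\ell n)$ with denominator a bounded power of $v$, so $S(i,D,s)$ has bit length $O(\ell n)$. The same bound covers the intermediate products $S(j_1,D,s_1)S(j_2,D,s_2)$ and all partial sums, since by the correctness analysis distinct scenario pairs contribute distinct monomials of \eqref{eq:bubble_term} and no partial sum can exceed the whole. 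I expect the only delicate part to be checking that the per-node algorithms match Lemmas~\ref{lem:sum_join}--\ref{lem:sum_forget} verbatim --- in particular the ``set'' versus ``add'' distinction in the introduce case and the bookkeeping of $D$ versus $D'$ for the self-loop parameter in the forget case; everything else is routine.
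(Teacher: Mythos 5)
Your proposal is correct and follows essentially the same approach as the paper: correctness is a bottom-up induction on the nice tree decomposition whose inductive step is exactly Lemmas~\ref{lem:sum_join}, \ref{lem:sum_introduce}, and \ref{lem:sum_forget}, and the running-time bound is read off from the scenario count (Lemma~\ref{lem:scenario_bound}), with the join node dominating. Your explicit remark that the introduce map $s'\mapsto\sp{s'}{a}{G\slt D[X_i]}$ is injective (so that the plain assignment in Algorithm~\ref{alg:introduce} is safe) is a useful clarification that the paper leaves implicit, but the overall argument is the same.
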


To evaluate the interlace polynomial of Arratia et al.\
\cite{arratia_two_var_interl}, which does not use self loop toggling
in its definition, we do not need parameter $D$ in
\eqref{eq:bubble_term} and the $D$-loop in Algorithm~\ref{alg:parts}.
This simplifies the algorithm a bit. The running time is also reduced,
but only by a factor $\leq 2^k$ and thus it is still $2^{3k^2+O(k)}n$.

If we consider path decompositions (see, for example, Bodlaender
\cite{Bodlaender19981}) instead of tree decompositions, we have no
join nodes. Thus, for graphs of bounded pathwidth, we get a result
similar to Theorem~\ref{thm:algo} but with running time reduced to
$2^{1.5k^2 + O(k)}\cdot n$.

\subsection{Full-Rank Induced Subgraphs -- The Case $v=0$.}
\label{ssec:full_rank}

If $v=0$, the summation in \eqref{eq:interlace_polynom} extends only
over the $A,B\subs V$, $A\cut B = \emptyset$, such that the adjacency
matrix of $G\slt B[A\uni B]$ has full rank. This sum can be evaluated
using essentially the same techniques we have developed so far. Let us
sketch briefly what changes have to be made.

Consider the situation described on
Page~\pageref{lab:scenario_motivation}, i.e.\ there is an extended
graph $\eG {V'} U$, and symmetric Gaussian elimination on $G$ using
$V'$ has been performed. The result is depicted in
Figure~\ref{fig:scenario_intuition01}. Let $\S$ denote the columns of
the $U\times V'$ submatrix that are not ``ruled'' by any $1$-entry of
the $V'\times V'$ submatrix. (These columns are indicated by question
marks in Figure~\ref{fig:scenario_intuition01}.) Then the following
holds: The adjacency matrix of $G[V'\uni U]$ has full rank only if
$\S$ is linearly independent. If $U=\emptyset$, the converse is also
true for trivial reasons. Following this observation, we can modify
our algorithm to count full-rank induced subgraphs only and thus
evaluate the interlace polynomial at points with $v=0$.

The first modification is to extend
Definition~\ref{def:extension_scenario} as follows: The scenario of an
extended graph $\EG G {V'} U$ is said to have \defexpr{full rank} if
the column set $\S$ defined as above is linearly independent.

Next, we replace \eqref{eq:bubble_term} by
\begin{equation}
  \label{eq:bubble_term'}
  S(i, D, s) = \sum_{A,B} x_A y_B u^{\rk((G_i\slt B)[A\uni B])},
\end{equation}
where the summation extends over all $A,B$ as in
\eqref{eq:bubble_term} with the additional restriction that the
scenario of $\EG {G'} {A\uni B} {X_i}$, $G'=G\slt (B\uni D)$, must
have full rank.

Following the arguments in Section~\ref{sec:converting}, it is
possible to prove that full-rank scenarios can be used with tree
decompositions in the same way as ordinary scenarios. For instance,
the following version of Lemma~\ref{lem:rank_join} handles the join of
full-rank scenarios:

\begin{lem}[Join for full-rank]
\label{lem:rank_join_full_rank}
Let $G=(V,E)$ be a graph, $U\subs V$, and $s_1, s_2$ be two scenarios
of $U$. Then exactly one of the following cases applies:
\begin{enumerate}
\item For all disjoint subgraphs $G[V_1]$ and $G[V_2]$ of $G$ such
  that
  \begin{enumerate}
  \item $G[V_1]$ and $G[V_2]$ may be extended by $U$ according to $G$,
  \item $\eG {V_1} U$ has full-rank scenario $s_1$, and
  \item $\eG {V_2} U$ has full-rank scenario $s_2$,
  \end{enumerate}
  the scenario of $\eG {V_1\uni V_2} U$ is $\st {s_1}{s_2} {G[U]}$ but
  it does not have full rank.
\item For the same family of graphs as in the first case, the
  following holds: The scenario of $\eG {V_1\uni V_2} U$ is $\st {s_1}
  {s_2} {G[U]}$ and it has full rank.
\end{enumerate}
Moreover, during the $\poly(|U|)$-time computation of $\st {s_1} {s_2}
{G[U]}$ as described in the proof of Lemma~\ref{lem:rank_join}, it can
be decided which of the two cases applies. We say that $\st {s_1}{s_2}
{G[U]}$ \defexpr{preserves full rank} if the second case applies.
\end{lem}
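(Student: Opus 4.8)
The plan is to piggyback on the proof of Lemma~\ref{lem:rank_join} and, in addition, to keep track of the set $\S$ of ``unruled'' columns of Figure~\ref{fig:scenario_intuition01}, i.e.\ the columns of the $U\times V'$ submatrix of the matrix obtained by symmetric Gaussian elimination using $V'$ that are not ruled by any $1$ of the $V'\times V'$ submatrix; by the definition of full rank given above, a scenario has full rank iff this column set is linearly independent.

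First I would note that the hypotheses already give $\Sc(\eG{V_1}{U})=s_1$ and $\Sc(\eG{V_2}{U})=s_2$, so Lemma~\ref{lem:rank_join} applies verbatim and yields $\Sc(\eG{V_1\uni V_2}{U})=\st{s_1}{s_2}{G[U]}$ for every valid pair $(V_1,V_2)$; what remains is to decide whether this scenario has full rank. Here I would reuse the analysis from the proof of Lemma~\ref{lem:rank_join}: symmetric Gaussian elimination on the adjacency matrix of $G[V_1\uni V_2\uni U]$ using $V_1\uni V_2$ performs the same operations as elimination on $G[V_1\uni U]$ using $V_1$ followed by elimination on $G[V_2\uni U]$ using $V_2$, the $V_1$-steps affect neither the $U\times V_2$ block nor the ruling pattern inside $V_2\times V_2$, and the $V_2$-steps affect neither the $U\times V_1$ block nor the ruling pattern inside $V_1\times V_1$. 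Hence a vertex of $V_1$ is ruled in the combined eliminated matrix exactly when it is ruled during the elimination of $G[V_1\uni U]$, with the same $U\times V_1$ column, and symmetrically for $V_2$. Therefore the unruled columns of the combined matrix are precisely the concatenation of the unruled columns of $\eG{V_1}{U}$ with those of $\eG{V_2}{U}$. By the full-rank hypotheses the unruled columns of $\eG{V_1}{U}$ are linearly independent, hence form their own minimal basis and coincide, as a set, with $\SUV{s_1}{U}{V_1}$, and likewise for $\eG{V_2}{U}$ and $\SUV{s_2}{U}{V_2}$; so the combined matrix has exactly $|\SUV{s_1}{U}{V_1}|+|\SUV{s_2}{U}{V_2}|$ unruled columns, the vectors of $\SUV{s_1}{U}{V_1}$ together with those of $\SUV{s_2}{U}{V_2}$.

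Consequently $\st{s_1}{s_2}{G[U]}$ has full rank iff these $|\SUV{s_1}{U}{V_1}|+|\SUV{s_2}{U}{V_2}|$ vectors are linearly independent, that is, iff
\[
  \dim\bigl(\operatorname{span}(\SUV{s_1}{U}{V_1})+\operatorname{span}(\SUV{s_2}{U}{V_2})\bigr)=|\SUV{s_1}{U}{V_1}|+|\SUV{s_2}{U}{V_2}| ,
\]
which in particular forces $\SUV{s_1}{U}{V_1}$ and $\SUV{s_2}{U}{V_2}$ to be disjoint. This condition involves only $s_1$ and $s_2$, so it either holds for all valid pairs $(V_1,V_2)$ or for none; if it holds we are in the second alternative of the lemma and say that $\st{s_1}{s_2}{G[U]}$ \emph{preserves full rank}, otherwise in the first. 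For the algorithmic claim I would observe that the proof of Lemma~\ref{lem:rank_join} constructs $\SUV{s_3}{U}{(V_1\uni V_2)}$ by greedily sieving $\SUV{s_1}{U}{V_1}\uni\SUV{s_2}{U}{V_2}$ with Gaussian elimination over $GF(2)$; the displayed equality holds exactly when the resulting basis has the full size $|\SUV{s_1}{U}{V_1}|+|\SUV{s_2}{U}{V_2}|$, so the decision is obtained for free within the $\poly(|U|)$-time computation of $\st{s_1}{s_2}{G[U]}$. The step that needs the most care is the bookkeeping of the second paragraph --- checking that the $V_1$- and $V_2$-eliminations leave each other's $U\times V_j$ block and ruling pattern intact, so that the combined unruled columns are the disjoint concatenation of the two individual families with no cancellation or merging --- but this is precisely the disjointness analysis already carried out in the proof of Lemma~\ref{lem:rank_join}.
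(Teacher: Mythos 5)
The paper does not write out a proof of this lemma; it only remarks that the full-rank variants ``follow from the arguments in Section~\ref{sec:converting}.'' Your sketch correctly fills in exactly the intended argument: reuse the locality analysis from the proof of Lemma~\ref{lem:rank_join} to show that the unruled columns of the joined eliminated matrix are the disjoint union of the unruled columns for $\eG{V_1}{U}$ and $\eG{V_2}{U}$; observe that under the full-rank hypotheses these coincide (as sets) with $\SUV{s_1}{U}{V_1}$ and $\SUV{s_2}{U}{V_2}$, since all ruled columns in a $U\times V_j$ block are zero and linear independence forces every unruled column to survive into the minimal basis; and conclude that full rank of the join is equivalent to the dimension condition on $\SUV{s_1}{U}{V_1}\cup\SUV{s_2}{U}{V_2}$, which depends only on $s_1,s_2$ and is read off for free from the greedy sieving step that already computes $\SUV{s_3}{U}{(V_1\uni V_2)}$. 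This is correct and is the approach the paper points to.
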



In the algorithm, scenario-sums must be counted only if the scenario
has full rank. For instance, join nodes can be handled by
Algorithm~\ref{alg:join_full_rank}, which is a slight modification of
Algorithm~\ref{alg:join}.

\begin{algorithm}
\caption{Computing the full-rank parts at a join node.}
\label{alg:join_full_rank}
\begin{algorithmic}[1]
  \Procedure{Join\_full\_rank}{$i$, $D$}
  \ForAll {scenarios $s$ for $|X_i|$ vertices}
  \State $S(i, D, s) \gets 0$
  \EndFor
  \State $(j_1, j_2) \gets (\text{left child of $i$}, \text{right child of $i$})$
  \ForAll {scenarios $s_1, s_2$ for $|X_i|$ vertices}
    \If {$\st{s_1}{s_2}{G\slt D [X_i]}$ preserves full rank}
      \State $s\gets \st{s_1}{s_2}{G\slt D [X_i]}$
      \State $S(i,D,s) \gets S(i,D,s) + S(j_1, D, s_1)\cdot S(j_2, D, s_2)$
    \EndIf
  \EndFor
  \EndProcedure
\end{algorithmic}
\end{algorithm}

In this way, Theorem~\ref{thm:algo} can be established for the case
$v=0$ as well.

\section{Variants of the Algorithm}
\label{sec:variants}

\subsection{Evaluation vs.\ Computation}
\label{ssec:eval_vs_comp}

The main motivation for our algorithm is \emph{evaluation} of the
multivariate interlace polynomial: We are given numerical values for
the variables $x_a, y_a, u, v$, an $n$-vertex graph $G$ and a nice
tree decomposition of $G$. From this, we want to compute the numerical
value $C(G;(x_a)_{a\in V}, (y_a)_{a\in V}, u, v)$. Our algorithm
solves this problem as described above.

Another problem one might be interested in is the \emph{computation}
of the interlace polynomial: Given $G$, output a description of the
polynomial $C(G)$, which is a polynomial over the indeterminates
$\{x_a, y_a \ |\ a\in V\} \uni \{u,v\}$. As the number of monomials of
$C(G)$ is exponential in $n$, there is no algorithm with running time
polynomial in $n$ that computes the multivariate interlace polynomial
if we represent $C(G)$ as a list of the coefficients of all the
monomials.  However, there are other ways of representing polynomials,
for example arithmetic formulas and arithmetic circuits, which are
considered in algebraic complexity theory \cite{Buerg1997}.

An \emph{arithmetic circuit} is a directed graph with nodes of
indegree 0 or 2. Nodes with indegree 0 are inputs and labeled by a
constant or a variable. They compute the polynomial they are labeled
with. Nodes with indegree two are labeled with plus or times and
compute the sum (product, resp.) of their children. We say that a
circuit computes a polynomial if it computes it at one of its nodes.

If one accepts arithmetic circuits as a compact way to describe
polynomials, then our algorithm actually \emph{computes} the
multivariate interlace polynomial: Use Algorithm~\ref{alg:main} as a
procedure to create an arithmetic circuit for the polynomial $C(G)$ in
the following way. Start with a circuit with inputs $x_a$ and $y_a$
for each $a\in V$, as well as inputs for $u$, $v$, $0$, and $1$. For
each operation of the algorithm of Section~\ref{sec:algo} using the
``parts'' $S(i,D,S)$, add gates that implement this operation. In this
way, the algorithm creates an arithmetic circuit $\CC$ of size
$2^{3k^2+O(k)}n$ that computes $C(G)$.

In the following two subsections, we use this point of view for
parallel evaluation and for computation of $d$-truncations of the
multivariate interlace polynomial.

\subsection{Parallelization}
\label{ssec:parallel}

In this subsection we discuss a way to parallelize our algorithm. We
do this using two operations on the tree decomposition: (1) removing
all leaves and (2) contracting every path with more than one node. Our
approach is not new but a variation of standard methods \cite[Section
2.6.1]{Leighton:1992}, \cite[Section 3.3]{jaja}.

To describe the operations, we need some formalism. We use vectors
$\sigma$ to collect the parts of the interlace polynomial which are
computed. For each node $i$ we define the vector $\sigma_i=(S(i,D,s)\
|\ D\subs X_i, \text{$s$ scenario of $X_i$})$, where the order of the
components of the vector is fixed appropriately.  We call $\sigma_i$
the ``output'' of node $i$. We call nodes with one child $1$-nodes and
nodes with two children $2$-nodes. Nodes without children are leaves.
Every $1$-node has one input vector $\sigma_j$ which is the output of
its child, every $2$-node has two input vectors which are the output
vectors of its children. By definition, for leaves the input and the
output is identical.

With each $1$-node $i$ with child $j$ we associate a matrix $A_i$. The
computation of the $1$-node $i$ is $\sigma_i = A_i \sigma_j$. For an
introduce node $i$ with child $j$, by Lemma~\ref{lem:sum_introduce} we
trivially can write $\sigma_i=A_{i} \sigma_j$ for some matrix $A_{i}$.
The entries of $A_i$ are either $0$ or $1$. Now let $i$ be a forget
node with child $j$. Consider \eqref{eq:sum_forget}. Note that in each
of the three sums, the question, which $S(j,D,s')$ ($S(j,D',s')$,
resp.) are used, i.\ e.\ over which $(D, s')$ ($(D',s')$, resp.) is
summed, can be answered considering only $G[X_j]$ and the involved
scenarios.  Thus, we can compute from this a matrix $A_{i}$ with
$\sigma_i = A_{i}\sigma_j$, too. The entries of $A_i$ are $0$, $1$,
$x_a u^l v^{1-l}$ or $y_a u^l v^{1-l}$, where $l\in \{0,1,2\}$.

Consider a $2$-node $i$ with children $j_1$ and $j_2$. The computation
performed at $i$ is
\begin{equation}
\label{eq:twonode}
\sigma_i(D,s)=\sum \sigma_{j_1}(D,s_1)\sigma_{j_2}(D,s_2),
\end{equation}
where the sum is taken over the same elements as in
\eqref{eq:sum_join}.

The parallel computation of the interlace polynomial works as follows.
We start with the nice tree decomposition of the input graph with
$O(n)$ nodes and an arithmetic circuit of constant depth which
computes $\sigma_i$ for all leaves $i$ of the tree decomposition and
$A_i$ for all matrices associated with any node $i$ of the tree
decomposition.  Then we reduce the tree underlying the tree
decomposition step by step. Every time we reduce the tree, we extend
the arithmetic circuit such that the above invariant is preserved.

We initialize the arithmetic circuit as follows: We insert the
constants $0$ and $1$, $u$, $v$ and for every vertex $a$ of $G$ we
insert $x_a$ and $y_a$. Then we produce all entries of all matrices
associated with any node of the tree decomposition in parallel. This
takes constant depth.

We repeat the following operations on the tree decomposition until it
consists only of one leaf: (1) contract all paths of $1$-nodes and (2)
remove all leaves.

\emph{Path contraction} works as follows. For a sequence $i_1, i_2,
\ldots, i_\ell$ of $1$-nodes we have $\sigma_{i_\ell} = A_{i_\ell}
\cdot \ldots \cdot A_{i_1}\sigma_j$, where $\sigma_j$ is the input of
node $i_1$. Thus, we can substitute the sequence by one node which has
$\AA=A_{i_\ell} \cdot \ldots \cdot A_{i_1}$ associated with it and
gets $\sigma_j$ as input. The depth of computing the matrix product in
parallel is $\Theta(\log \ell)$. Thus a step contracting any number of
disjoint $1$-nodes paths of length $\leq \ell$ increases the depth of
the arithmetic circuit by $\Theta(\log \ell)$.

Now we come to \emph{removal of leaves}. By this we mean the
following: Let $L$ be the set of all leaves of the tree decomposition.
Remove the elements of $L$ distinguishing the following cases: (1)
node $i$ has two children $j_1$ and $j_2$ which are both leaves, (2)
node $i$ has two children $j_1$ and $j_2$, one of which is a leaf
($j_1$, say) whereas the other is not, and (3) node $i$ has one child
$j$ which is a leaf. To handle case (1) we introduce a level with
multiplications and a level with additions to perform
\eqref{eq:twonode}. This increases the depth by $2$. In case (2) node
$i$ becomes a $1$-node: The $\sigma_{j_1}(D,s)$ in \eqref{eq:twonode}
become coefficients of a new matrix $\AA$ associated to $i$. As by the
invariant, the arithmetic circuit already computes the
$\sigma_{j_1}(D,s)$, we do not need any new gates and depth is not
increased. For case (3) we have to implement the matrix multiplication
$A_i\sigma_j$ to compute $\sigma_i$. This increases the depth by a
constant. Thus, removing all leaves in $L$ increases the depth only by
a constant.

After performing all possible path contractions, the number of
$1$-nodes is at most two times the number of $2$-nodes. Thus, at least
$1/4$ of the nodes are leaves. This implies that the following removal
of leaves decreases the number of nodes of the tree decomposition by a
factor of at least $1/4$. Thus, after $O(\log n)$ steps the tree
decomposition is reduced to a single leaf. In each step the depth
increases by at most $O(\log n)$, which gives a $O(\log^2 n)$ bound on
the depth of the constructed arithmetic circuit.

\subsection{Computation of the Coefficients}

\label{ssec:coefficients}

As discussed in Section~\ref{ssec:eval_vs_comp}, our algorithm can be
used to create an arithmetic circuit $\CC$ of size $2^{3k^2+O(k)}n$
that computes $C(G)$ for an $n$-vertex graph $G$ with appropriate tree
decomposition of width $k$.  Now one can apply standard techniques to
convert $\CC$ into a procedure computing some of the coefficients of
$C(G)$.

Let us elaborate this for an example, the computation of the
$d$-truncation of the multivariate interlace polynomial. Courcelle
defines the $d$-truncation \cite[Section 5]{courcelle_interlace_final}
of a multivariate polynomial as follows.  The \emph{quasi-degree} of a
monomial is the number of vertices that index its indeterminates. As
the $G$-indexed part of the monomials of the multivariate interlace
polynomial are multilinear, the quasi-degree of a monomial of $C(G)$
is the degree of its $G$-indexed part. For example, the quasi-degree
of the monomial $x_Ay_Bu^{r}v^{s}$ is $|A|+|B|$. The
\emph{$d$-truncation} $\trunc {P(G)} d$ of a polynomial $P(G)$ is the
sum of its monomials of quasi-degree at most $d$. Let $\M$ be a set of
monomials. If \[ f=\sum_{m\in \M} a_m m \] is a polynomial and
$\M'\subs \M$, we set
\[ \trunc f {\M'} = \sum_{m\in \M'} a_m m.\]

As we want to use a result on fast multivariate polynomial
multiplication which uses computation trees \cite[Section
4.4]{Buerg1997} as model of computation, we also formulate our result
in this model. In addition to the arithmetic operations (addition,
multiplication, division), also comparisons are allowed in this model.
Each of these operations is counted as one step.

\begin{thm}[{\cite[Theorem 1]{fastmpmul}}]
\label{thm:fastmpmul}
Consider polynomials over the indeterminates $x_1, \ldots, x_n$. Let
$d$ be a positive integer, and $\D$ the monomials of degree at most
$d$. Let $f, g$ be two polynomials.  Then, assuming the coefficients
of $\trunc f \D$ and $\trunc g \D$ are given, the coefficients of
$\trunc {(f\cdot g)} \D$ can be computed using
\[O(D(\log D)^3 \log(\log D))\] operations in the computation tree
model, where $D=|\D|$.
\end{thm}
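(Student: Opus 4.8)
The plan is to reduce the total-degree-truncated multivariate product to a small number of \emph{univariate} polynomial multiplications and then invoke a fast univariate multiplication routine, e.g.\ the Cantor--Kaltofen / Sch\"onhage--Strassen bound of $O(M\log M\log\log M)$ ring operations for multiplying two polynomials of degree $M$ over an arbitrary commutative ring with unit. Identifying a monomial with its exponent vector $a\in\Z_{\ge 0}^n$, $|a|\le d$, the inputs $\trunc f\D$ and $\trunc g\D$ each have at most $D=|\D|=\binom{n+d}{n}$ nonzero coefficients, and the target $\trunc{(fg)}\D$ is again supported on exponent vectors of total degree $\le d$, so the output also has at most $D$ nonzero coefficients. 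If one can lay out $\trunc f\D$, $\trunc g\D$ as univariate polynomials of degree $D^{O(1)}$ in such a way that their univariate product exposes the coefficients of $\trunc{(fg)}\D$, one wins.

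The obstacle that makes a separate argument necessary is that the obvious Kronecker substitution $x_i\mapsto y^{(2d+1)^{i-1}}$, while injective on all exponent vectors of total degree $\le 2d$ (hence faithful for the product of two degree-$\le d$ inputs), produces a univariate polynomial of degree about $(2d)^n$, which is exponentially larger than $D$ once $n$ grows, and it also wastes work on the $(2d)^n-D$ exponent vectors that cannot occur. The fix is a \emph{degree-aware} encoding $\kappa$ of the simplex $\{a\in\Z_{\ge 0}^n:|a|\le 2d\}$ into an interval $\{0,1,\dots,N-1\}$ of polynomial size $N=\binom{n+2d}{n}=D^{O(1)}$, built recursively on the number of variables: an exponent vector $a=(a',a_n)$ is encoded by placing the block of codes of the degree-$\le(2d-a_n)$ simplex on $x_1,\dots,x_{n-1}$ at an offset determined by $a_n$ and using $\kappa$ on $a'$ inside that block. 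One then forms $\hat f(y)=\sum_a(\trunc f\D)_a\,y^{\kappa(a)}$ and $\hat g$ likewise, computes the univariate product $\hat h=\hat f\hat g$ of degree $<N=D^{O(1)}$, and reads off $\trunc{(fg)}\D$. Injectivity of $\kappa$ on the degree-$\le 2d$ range prevents \emph{distinct} product monomials from interfering, but $\kappa$ is only \emph{almost} additive, so a carry-propagation / re-indexing pass over the $\le N$ coefficients of $\hat h$ is needed to route each contribution to the true rank of the monomial it represents; equivalently, one organises the ranked convolution as a divide-and-conquer over the $O(\log d)$ total-degree scales, each scale costing one univariate multiplication and a locate step using the comparisons permitted in the computation-tree model.

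Assembling the pieces, the cost is $O(1)$ (or $O(\log d)$) univariate multiplications of degree-$D^{O(1)}$ polynomials, each $O(D\log D\log\log D)$ ring operations, plus encoding, decoding and carry-handling passes touching $O(D\,\polylog D)$ positions; the $\log$-factors from the logarithmic-depth recursion over the variables and from the carry passes accumulate to the stated $O(D(\log D)^3\log\log D)$. The step I expect to be the main obstacle is the design and analysis of $\kappa$: it must be simultaneously injective on the degree-$\le 2d$ simplex, of range polynomial in $D$ (so that the FFT-based multiplication is efficient), and close enough to additive that the gap between $\kappa(a)+\kappa(b)$ and $\kappa(a+b)$ is repairable in amortised polylogarithmic time per output coefficient. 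Reconciling ``polynomial range'' with ``almost additive'' --- which plain Kronecker substitution achieves only at an exponential range --- is the technical heart, and it is exactly there that a careful treatment in the computation-tree model (comparisons to find the correct block at each scale) is required.
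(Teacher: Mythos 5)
This theorem is not proved in the paper at all: it is cited verbatim from the reference \cite[Theorem~1]{fastmpmul} and used as a black box in the proof of Corollary~\ref{cor:coeffis}. So there is no ``paper proof'' to compare against; I can only assess your sketch on its own terms.

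On those terms, the sketch has a fatal gap at the point you yourself flag. The encoding $\kappa$ you propose is the rank of an exponent vector in the degree-$\le 2d$ simplex. That map is a bijection onto $\{0,\dots,N-1\}$ with $N=\binom{n+2d}{n}$, but it is not additive in any useful sense: it is easy to exhibit $(a,b)\neq(a',b')$ with $a+b\neq a'+b'$ and yet $\kappa(a)+\kappa(b)=\kappa(a')+\kappa(b')$. When that happens, the univariate convolution $\hat f\hat g$ irreversibly mixes the contributions of distinct product monomials into a single coefficient, and no ``carry-propagation / re-indexing pass'' can undo the loss --- the information is already destroyed, not merely misplaced. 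For your scheme to work, $\kappa(a)+\kappa(b)$ would have to \emph{determine} $a+b$; a genuinely additive injective $\kappa$ on the simplex is essentially a linear form $\kappa(a)=\sum_i c_i a_i$ whose coefficient set $\{c_i\}$ is a $B_{2d}$ (generalised Sidon) set, and such a set of size $n$ forces range $N=\Omega(n^{2d})$, which in the regime relevant to Corollary~\ref{cor:coeffis} ($d$ fixed, $n\to\infty$) is $\Theta(D^2)$, not $\tilde O(D)$. So neither branch of your ``almost additive'' idea reaches the stated bound: the rank map loses information, and the additive maps blow up the degree quadratically. A correct proof of the theorem has to avoid a single monolithic Kronecker-style substitution --- e.g.\ by homogenising and doing a recursion over the $O(\log d)$ degree scales with evaluation--interpolation inside each scale --- and that is where the extra $(\log D)^2$ factor in the bound actually comes from. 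Since the paper only cites the result, you should do the same rather than re-derive it.
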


\begin{cor}
\label{cor:coeffis}
Let $G$ be a graph with $n$ vertices. Let a nice tree decomposition of
$G$ with width $k$ and $O(n)$ nodes be given. Then the coefficients of
all monomials of the $d$-truncation of $C(G)$ can be computed using
\[ 2^{3k^2 + O(1)} n^{d(1+o(1))+O(1)}\] operations in the computation
tree model.
\end{cor}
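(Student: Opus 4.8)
The plan is to start from the arithmetic circuit $\CC$ of size $2^{3k^2+O(k)}n$ that computes $C(G)$, whose existence was established in Section~\ref{ssec:eval_vs_comp}, and to replace every gate of $\CC$ by a ``truncated'' computation that keeps only the coefficients of monomials of quasi-degree at most $d$. Since the $G$-indexed part of every monomial of $C(G)$ is multilinear in the $2n$ variables $\{x_a,y_a\mid a\in V\}$, the relevant set of monomials $\D$ consists of the (at most) $\sum_{j\le d}\binom{2n}{j}=n^{d(1+o(1))}$ multilinear monomials in these $2n$ variables; the ordinary indeterminates $u,v$ carry degrees up to $n$ each, contributing a further polynomial factor $n^{O(1)}$ that I will absorb into the $n^{O(1)}$ term, so $D=|\D|=n^{d(1+o(1))+O(1)}$. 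Each input gate of $\CC$ (a constant, or one of $u,v,x_a,y_a$) trivially has a truncation that is representable within this bound.

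Next I would propagate truncations through the circuit gate by gate in topological order, maintaining the invariant that at every gate $g$ computing a polynomial $p_g$ we store the coefficient list of $\trunc{p_g}{\D}$. For an addition gate this is just componentwise addition of two coefficient lists, costing $O(D)$ operations. For a multiplication gate $g=g_1\cdot g_2$, I invoke Theorem~\ref{thm:fastmpmul}: given $\trunc{p_{g_1}}{\D}$ and $\trunc{p_{g_2}}{\D}$, it produces $\trunc{(p_{g_1}p_{g_2})}{\D}$ using $O(D(\log D)^3\log\log D)$ computation-tree operations. Since $\CC$ has $2^{3k^2+O(k)}n$ gates and each gate costs at most $O(D(\log D)^3\log\log D)$ operations, the total is $2^{3k^2+O(k)}n\cdot D\cdot (\log D)^3\log\log D$; using $D=n^{d(1+o(1))+O(1)}$ the polylogarithmic factors and the factor $n$ and the $2^{O(k)}$ all get swallowed into the exponents, yielding $2^{3k^2+O(1)}n^{d(1+o(1))+O(1)}$ operations, which is exactly the claimed bound. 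Finally, reading off the coefficients of the $d$-truncation $\trunc{C(G)}{d}$ is immediate: it is the coefficient list stored at the output gate of $\CC$ restricted to the monomials of quasi-degree $\le d$, which are precisely the elements of $\D$ we have been tracking.

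The step I expect to require the most care is the bookkeeping of $D$ and the verification that all the ``parasitic'' factors ($2^{O(k)}$ from the gate count's lower-order term, the single $n$ from the circuit size, and the $(\log D)^3\log\log D$ from Theorem~\ref{thm:fastmpmul}) genuinely fit inside the stated $2^{3k^2+O(1)}n^{d(1+o(1))+O(1)}$ form; in particular one must check $(\log D)^3\log\log D = (d\log n)^{O(1)} = n^{o(1)}$ for fixed $d$, and that the powers of $u,v$ (degree up to $n$) inflate $D$ only by a factor $n^{O(1)}$ rather than something worse, so that treating the $u,v$-exponents as part of the monomial index is harmless. A secondary technical point is that Theorem~\ref{thm:fastmpmul} is stated for polynomials in $x_1,\dots,x_n$ with $\D$ the monomials of degree $\le d$, whereas here we have a mixed situation (multilinear in $2n$ variables, plus two high-degree variables); this is handled by regarding $C(G)$ as a polynomial in the $2n+2$ indeterminates and noting that the monomials actually occurring all lie in a set of size $n^{d(1+o(1))+O(1)}$, so applying the theorem with this enlarged-but-still-polynomial $D$ changes nothing essential. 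Once these routine estimates are confirmed, the corollary follows.
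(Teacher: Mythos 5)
Your overall strategy—convert the arithmetic circuit $\CC$ of size $2^{3k^2+O(k)}n$ into a circuit that manipulates lists of coefficients of $d$-truncations, handling additions componentwise and multiplications via Theorem~\ref{thm:fastmpmul}—matches the paper's proof in outline. However, there is a genuine gap in the multiplication step, and you even flag it at the end but then dismiss it incorrectly.

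Theorem~\ref{thm:fastmpmul} is stated for $\D$ being \emph{the set of all monomials of degree at most $d$} in the given indeterminates; the fast-multiplication machinery behind it depends on $\D$ being a degree ball, not merely a small set. The monomials of the $d$-truncation of $C(G)$, namely $x_Ay_Bu^iv^j$ with $|A|+|B|\le d$, are of \emph{quasi-degree} at most $d$ but of actual degree up to $d+2n$ because of the $u^iv^j$ factors. If you regard $C(G)$ as a polynomial in the $2n+2$ indeterminates and try to apply the theorem, you must take the degree bound to be $d+2n$; the resulting $D$ is then the number of monomials of degree $\le d+2n$ in $2n+2$ variables, which is exponential in $n$, not $n^{d(1+o(1))+O(1)}$. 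The theorem does \emph{not} let you substitute an ``enlarged-but-still-polynomial $D$'' coming from an arbitrary set of monomials; the claim that this ``changes nothing essential'' is precisely where your argument fails. The paper's proof avoids this by separating the two kinds of variables: it applies Theorem~\ref{thm:fastmpmul} only to the $G$-indexed part (where $\D=\{x_Ay_B\mid |A|+|B|\le d\}$ sits inside a genuine degree-$d$ ball of size $(2n+1)^d$), and handles the $u,v$-degrees explicitly by iterating over the $(n+1)^2$ choices of $(d_u,d_v)$ and the $(n+1)^2$ ways to split each between the two factors—a ``school method'' convolution in $u,v$. That separation is the essential step missing from your proposal; without it the invocation of Theorem~\ref{thm:fastmpmul} at multiplication gates is unjustified.
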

Note that the $d$-truncation of $C(G)$ has more than $\binom n d \geq
n^{d(1-\log d / \log n)}$ monomials.

\begin{proof}[of Corollary~\ref{cor:coeffis}]
  Let us fix a $d$ and a graph $G$ with $n$ vertices and treewidth
  $k$. We want to compute the coefficients of the $d$-truncation of
  $C(G)$. As discussed in Section~\ref{ssec:eval_vs_comp}, there
  exists an arithmetic circuit $\CC$ of size $2^{k^3+O(k)}n$ computing
  $C(G)$. We convert every operation $f=g+h$ or $f = g \cdot h$ in
  $\CC$ into a sequence of operations computing the coefficients of
  each monomial of $\trunc f d$. In this way, we also get the
  coefficients of $\trunc{C(G)}d$.  To prove the corollary, it is
  sufficient to show that each operation is converted into at most
  $n^{d(1+o(1))+O(1)}$ operations.

  We start with additions. We convert every addition gate $f=g+h$ in
  $\CC$ into the operations $f_m=g_m+h_m$, $m\in \M$, where $\M$ is an
  appropriate set of monomials. The monomials of $\trunc {C(G)} d$ are
  a subset of $\M$ if $\M$ denotes the set of all monomials over
  $G$-indexed variables $x$ and $y$ and ordinary variables $u$ and $v$
  such that the quasi-degree is at most $d$ and the degree in $u$ and
  in $v$ is at most $n$. We can select a monomial in $\M$ in the
  following way.  First, choose $d$ times either $1$ or a variable
  from $\{x_a, y_a\ |\ a\in V\}$.  Then, choose the exponent of $u$
  and $v$ from $\{0, 1, \ldots, n\}$. Thus, we have
\begin{equation}
\label{eq:Mbound}
|\M|\leq(2n+1)^d (n+1)^2=n^{d\big(1+\frac {O(1)} {\log n}\big)+O(1)}.
\end{equation}
As we convert every addition from $\CC$ into $|\M|$ operations, the
claimed bound of the corollary is fulfilled.

Now let us consider multiplications, i.e.\ let $f=g\cdot h$ be a
multiplication gate in $\CC$. We use fast multivariate polynomial
multiplication for the $G$-indexed variables and the school method for
the ordinary variables. To this end, we fix the $u$- and $v$-part of
the monomial, i.e.\ we choose $d_u$ and $d_v$, $0\leq d_u, d_v\leq n$.
We want to compute the coefficients of the monomials $m$ of $f$ with
$\deg_u(m)=d_u$ and $\deg_v(m)=d_v$.  Choose nonnegative integers
$d_{u,g},d_{u,h}, d_{v,g}, d_{v,h}$ such that $d_{u,g}+d_{u,h}=d_u$
and $d_{v,g}+d_{v,h}=d_v$. Let
\[\D = \{x_A y_B\ |\ A, B\subs V(G), |A|+|B|\leq d\}.\] We can assume
that we have already computed all coefficients of $\tilde g :=\trunc g
{u^{d_{u,g}}v^{d_{v,g}}\D}$ and $\tilde h:=\trunc h {u^{d_{u,h}}
  v^{d_{v,h}} \D}$. (Here, an expression of the form $u^a v^b \D$
denotes the set $\{u^a v^b m\ |\ m\in \D\}$.) By
Theorem~\ref{thm:fastmpmul}, we can compute all coefficients of the
product $\tilde g \cdot \tilde h$ using
\[O(|\D|(\log |\D|)^3\log\log |\D|) = n^{d\big(1+\frac {O(1)}{\log
    n}\big)+\frac{O(\log\log n)}{\log n}}\] operations, as $|\D|\leq
(2n+1)^d\leq n^{d\big(1+\frac {\log 3} {\log n}\big)}$. We do this for
every choice of $d_{u,g}$, $d_{u,h}$, $d_{v,g}$, and $d_{v,h}$. As
these are at most $(n+1)^2$ many, this takes $n^{d\big(1+\frac {O(1)}
  {\log n}\big)+O(1)}$ steps. Adding the results monomial-wise needs
at most $|\D|(n+1)^2=n^{d\big(1+\frac {O(1)} {\log n}\big)+O(1)}$
additions and yields the coefficients of $\trunc f
{u^{d_u}v^{d_v}\D}$. We do this for all $(n+1)^2$ choices of $d_u$ and
$d_v$ to obtain the coefficients of all monomials of the
$d$-truncation of $f$. Thus, each multiplication in $\CC$ is converted
into $n^{d\big(1+\frac{O(1)} {\log n}\big) + O(1)}$ operations. This,
again, is within the claimed bound of the corollary.
 \end{proof}

\section{Further Questions}

\label{sec:open}

If we consider graphs of bounded cliquewidth instead of treewidth, so
called $k$-expressions take the role of tree decompositions. Our
concept of scenarios is tailor-made for tree decompositions and does
not work with $k$-expressions. Is there a linear algebra approach,
possibly similar to the one we presented in this work, to compute the
interlace polynomial using $k$-expressions?

The notion of rankwidth, which is related to cliquewidth
\cite{DBLP:journals/jct/Oum05, DBLP:journals/jct/OumS06}, is defined
using the $GF(2)$-rank of some matrices derived from a graph.
Furthermore, local complementation is studied in the context of the
interlace polynomial as well as in the context of rankwidth
\cite[Section 2]{DBLP:journals/jct/Oum05}. Thus, it seems to be
possible that rank decompositions support the computation of the
interlace polynomial very nicely. We have not investigated this
question in detail and leave it as a direction for further research.

\section*{Acknowledgements}
We would like to thank Bruno Courcelle and the anonymous referees for
their helpful comments.

\bibliographystyle{alpha}
\bibliography{literatur}

\end{document}